\newtheorem{thm}{Theorem}
\newtheorem{lem}{Lemma}
\newtheorem{prob}{Problem}
\newtheorem{assumption}{Assumption}
\newtheorem{rem}{Remark}
\begin{document}


\title{
Design of Distributed Controller for Discrete-Time Systems Via the Integration of Extended LMI and Clique-Wise Decomposition
}

\author{Sotaro Fushimi, Yuto Watanabe, \IEEEmembership{Student Member, IEEE}, and Kazunori Sakurama, \IEEEmembership{Member, IEEE}
\thanks{*This work was supported by JSPS KAKENHI Grant Number 23K22781 and 23K20947.}
\thanks{Sotaro Fushimi is with the Department of Systems Science, Graduate School of Informatics, 
        Kyoto University, Yoshida-Honmachi, Sakyo-ku, Kyoto 606-8501, Japan, {\tt s-fushimi@sys.i.kyoto-u.ac.jp}.}%
\thanks{Yuto Watanabe is with the Department of Electrical and Computer Engineering, 
        University of California San Diego, San Diego, CA 92093 USA, {\tt y1watanabe@ucsd.edu}.}%
\thanks{Kazunori Sakurama is with the Department of System Innovation, Graduate School of Engineering Science, 
        Osaka University, 1-3, Machikaneyama, Toyonaka, Osaka 560-8531, Japan,
        {\tt sakurama.kazunori.es@osaka-u.ac.jp}.}%
}

\maketitle
\thispagestyle{empty}

\begin{abstract}
This study addresses the centralized synthesis of distributed controllers using linear matrix inequalities (LMIs).
Sparsity constraints on control gains of distributed controllers result in conservatism via the convexification of the existing methods such as the extended LMI method.
In order to mitigate the conservatism, we introduce a novel LMI formulation for this problem, utilizing the clique-wise decomposition method from our previous work on continuous-time systems.
By reformulating the sparsity constraint on the gain matrix within cliques, this method achieves a broader solution set.
Also, the analytical superiority of our method is confirmed through numerical examples.
\end{abstract}
\begin{IEEEkeywords}
        Distributed control, Multi-agent systems, Lyapunov methods, Linear matrix inequalities
\end{IEEEkeywords}

\section{INTRODUCTION}
Large-scale systems are generally difficult to control centrally due to the heavy burden on communication and information processing.
Distributed handling of such systems has garnered significant interest and has been actively researched, especially with advancements in sensor and actuator technologies \cite{siljak2013decentralized, ren2008distributed, bullo2009distributed}.

The difficulty in centrally designing a distributed controller arises from the sparsity constraints on the structure of the controller gains.
This results in the non-convexity of optimization problems using linear matrix inequalities (LMIs), whereas in centralized control, problems can be formulated as convex optimization problems.
Exact convexification of this problem is still open, except for special cases such as positive-systems \cite{tanaka2011bounded, rantzer2015scalable}, or for finite horizon \cite{anderson2019system}.
Therefore, several convex relaxation methods have been proposed.
For continuous-time systems, a typical convex relaxation restricts the Lyapunov function to a block-diagonal matrix\cite{siljak1978large, sootla2019existence,vandenberghe2015chordal}. 
For discrete-time systems, alongside the block-diagonal relaxations \cite{siljak1978large, de2000design}, 
another approach based on extended LMIs has been employed in \cite{de1999lmi, de1999new, de2002extended, pipeleers2009extended},
which exploits a specific structure of discrete-time Lyapunov inequalities. 
Although this approach still involves conservatism, it allows us to search for non-block-diagonal Lyapunov matrices.
The conservatism of these methods arises from representing sparse gain matrices as the product of a sparse matrix and a block-diagonal matrix.
To reduce this conservatism, our prior work on continuous-time systems presented a method with clique-wise decomposition \cite{watanabe2024convex}.
However, \cite{watanabe2024convex} still exhibits conservatism, as the Lyapunov matrices are constrained by sparsity.

In this paper, we propose a novel method for designing distributed controllers for discrete-time systems, leveraging both the continuous-time clique-wise decomposition method \cite{watanabe2024convex} and extended LMI techniques \cite{de2002extended, de1999new, pipeleers2009extended}. 
First, we present an explicit formulation of the clique-wise decomposition method based on our previous work on continuous-time systems \cite{watanabe2024convex}. 
Next, we present a new convex LMI formulation, and appliy it to $H_\infty$ controller synthesis. 
Finally, the efficacy of the proposed methods is validated through numerical examples, demonstrating their superiority over existing methods \cite{geromel1994decentralized, de2002extended, watanabe2024convex}.

The main contribution of this paper is the derivation of new, less conservative LMI conditions than existing clique-wise decomposition and extended LMI methods, which cannot be obtained through a straightforward combination of these approaches.
Note that the formulations do not structurally restrict Lyapunov matrices, and such LMI formulations cannot be derived for continuous-time systems, to the best of authors' knowledge.

The paper is organized as follows. 
We present the target system and problem formulation in Section \ref{sec: problem statement}.
Section \ref{sec: preliminaries} provides preliminary lemmas, including the discrete-time formulation of the clique-wise decomposition method \cite{watanabe2024convex}.
Section \ref{sec: main result} presents our main result: the derivation of a new LMI condition. 
Our proposed method is extended to the $H_\infty$ control in Section \ref{sec: Hinf}.
Numerical examples for $H_\infty$ problem are given in Section \ref{sec: numerical examples}.

\textit{Notations:}
$O_{m\times n}\in\mathbb{R}^{m\times n}$ denotes the ${m\times n}$ zero matrix.
${\rm diag}(\dots,a_i,\dots)$ denotes the diagonal matrix with the $i$ th diagonal entry $a_i\in\mathbb{R}$.
Similarly, ${\rm blkdiag}(\dots,A_i,\dots)$ denotes the block-diagonal matrix with the $i$ th matrix entry $A_i$.
For $X\in \mathbb{R}^{m\times n}$ of ${\rm rank}(X)=r<m$, $X^\bot\in\mathbb{R}^{m\times m-r}$ represents the matrix satisfying ${\rm rank}[X,X^\bot]=m$ and $(X^\bot)^\top X=O$.
Non-diagonal components of symmetric matrices are denoted as $\ast$ for brevity.

\section{PROBLEM STATEMENT}\label{sec: problem statement}
Consider the system with
\begin{equation}
        x(k+1)=Ax(k)+Bu(k),\label{eq: sys}
\end{equation}
consisting of $N$ subsystems, where
$x = [x_1^\top,\cdots,x_N^\top]^\top \in \mathbb{R}^n$,
the state of $i$-th subsystem $x_i \in \mathbb{R}^{n_i}$ with
$\sum_{i=1}^{N} n_i = n$,
$u = [u_1^\top,\cdots,u_N^\top]^\top \in \mathbb{R}^m$,
the input of $i$-th subsystem $u_i \in \mathbb{R}^{m_i}$ with
$\sum_{i=1}^{N} m_i = m$, 
$A \in \mathbb{R}^{n\times n}$, and
$B \in \mathbb{R}^{n \times m}$.
We assume that $(A,B)$ is stabilizable.
Suppose that a communication network between subsystems is modeled by a time-invariant undirected graph $\mathcal{G}=(\mathcal{N}, \mathcal{E})$ with $\mathcal{N}=\{1,\dots,N\}$ and an edge set $\mathcal{E}$.
Namely, pairs of different nodes $i,j\in \mathcal{N}$ satisfy $(i,j)\in\mathcal{E}\Leftrightarrow (j,i)\in\mathcal{E}$.
A distributed state feedback controller is defined as
\begin{equation}
        u(k)=Kx(k), \ K\in\mathcal{S},\label{eq: FB}
\end{equation}
where $\mathcal{S}$ represents the set of matrices with graph-induced sparsity defined as
\begin{equation}
        \mathcal{S} = \left\{ K \in \mathbb{R}^{m\times n}: K_{ij}=O_{m_i\times n_j} \ {\rm if} \ (i,j)\notin \mathcal{E}, i \neq j \right\}.\label{eq: sparse}
\end{equation}
For simplicity in the following notation, we assume $n_i = m_i$ for all $i \in \mathcal{N}$ without loss of generality. 
Please refer to Appendix~\ref{App: general notation} for the generalized notation where $n_i \neq m_i$.

The set of all state feedback gains $K$ that stabilize the system \eqref{eq: sys}, denoted by $\mathcal{K}_{\rm all}$, can be characterized as follows:
\begin{equation}
  \begin{aligned}
        \mathcal{K}_{\rm all} =\{& K\in\mathcal{S}: \exists P\succ O\\
        &{\rm{s.t.}} \ 
        (A+BK)^\top P (A+BK) - P \prec 0\}.
  \end{aligned}\label{eq: Kall}
\end{equation}
Conventionally, this problem would be convexified using the Schur complement without the constraints on the distributedness, $\mathcal{K}\in\mathcal{S}$.
This involves transforming the matrix variables $Q=P^{-1}$ and $K=ZQ^{-1}$.
This convexification is ineffective for \eqref{eq: Kall} because $K\in \mathcal{S}$ is transformed into a non-convex constraint, $ZQ^{-1}\in\mathcal{S}$.
To the best of the authors' knowledge, an exact convex formulation of this problem has not yet been found.

A simple way to solve this issue is to impose a block-diagonal structure on the Lyapunov matrices \cite{siljak1978large, de2000design}. 
With this structural constraint, \eqref{eq: Kall} is convexly relaxed into the following solution set, denoted as $\mathcal{K}_{\rm diag}$;
\begin{align}
        \mathcal{K}_{\rm diag} =\{& ZQ^{-1}: \exists Z\in\mathcal{S}, Q={\rm blkdiag} (Q_1,\dots,Q_N)\succ O,\notag\\
        &{\rm{s.t.}} \ 
        \begin{bmatrix}
                Q&\ast\\
                AQ+BZ&Q
        \end{bmatrix}
        \succ O \}.\label{eq: Kdiag}
\end{align}The non-convex constraint $K=ZQ^{-1}\in\mathcal{S}$ is relaxed into a convex constraint by imposing structural constraints $Z\in\mathcal{S}$ and $Q={\rm blkdiag} (Q_1,\dots,Q_N)$.
However, this introduces conservatism, because there are $Z\notin \mathcal{S}$ and non-block-diagonal matrices $Q$ producing $ZQ^{-1}\in\mathcal{S}$.
Note that the Lyapunov matrix $P=Q^{-1}$ is constrained to a block-diagonal structure in this case.

To mitigate this conservatism, extended LMIs \cite{de2002extended, de1999new, pipeleers2009extended} have been proposed.
The following lemma shows that $\mathcal{K}_{\rm ext}$, the solution set of an extended LMI, is a convex relaxation of \eqref{eq: Kall} and less conservative than $\mathcal{K}_{\rm diag}$.
\begin{lem}[Distributed version of \cite{de1999new}]\label{lem: extended lmi}
        Let
        \begin{align}
                \mathcal{K}_{\rm ext} =\{& K=ZG^{-1}\in\mathcal{S}: \exists Z\in\mathcal{S}, Q\succ O,\notag\\
                &G={\rm{blkdiag}}(G_1,\dots, G_N),\notag\\
                &{\rm{s.t.}} \ 
                \begin{bmatrix}
                        G+G^\top-Q&\ast\\
                        AG+BZ&Q
                \end{bmatrix}
                \succ O\}.\label{eq: Kext}
        \end{align} 
        Then, $\mathcal{K}_{\rm diag} \subset\mathcal{K}_{\rm ext}\subset \mathcal{K}_{\rm all}$ holds.
\end{lem}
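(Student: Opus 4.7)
My plan is to establish the two inclusions separately via direct LMI manipulation, with the slack variable $G$ in \eqref{eq: Kext} playing the central role of decoupling the Lyapunov certificate from the gain factorization.

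For $\mathcal{K}_{\rm diag}\subseteq\mathcal{K}_{\rm ext}$, given $K=ZQ^{-1}$ with witnesses $(Z,Q)$ satisfying \eqref{eq: Kdiag}, I would simply pick $G:=Q$. Since $Q$ is block-diagonal with blocks conforming to the subsystem partition, it is an admissible choice in \eqref{eq: Kext}, and the identity $G+G^\top-Q=Q$ reduces the extended LMI back to the diagonal one. Because $ZG^{-1}=ZQ^{-1}=K$, this delivers $K\in\mathcal{K}_{\rm ext}$.

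For $\mathcal{K}_{\rm ext}\subseteq\mathcal{K}_{\rm all}$, given $K=ZG^{-1}$ with $(Z,Q,G)$ satisfying \eqref{eq: Kext}, I would check both $K\in\mathcal{S}$ and the Lyapunov inequality. Sparsity $K\in\mathcal{S}$ is immediate because $G^{-1}$ remains block-diagonal with the same partition as $G$, so the off-diagonal zero blocks of $Z$ are inherited by $ZG^{-1}$. For the Lyapunov inequality, the pivotal observation is the completion-of-squares identity $(G-Q)^\top Q^{-1}(G-Q)\succeq O$, equivalently $G^\top Q^{-1}G\succeq G+G^\top-Q$. The $(1,1)$ block of \eqref{eq: Kext} gives $G+G^\top-Q\succ O$, which combined with the preceding bound forces $G^\top Q^{-1}G\succ O$ and in particular the invertibility of $G$. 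Taking a Schur complement of \eqref{eq: Kext} against the $Q$-block and chaining with the same bound produces
\[
    G^\top Q^{-1}G \;\succ\; (AG+BZ)^\top Q^{-1}(AG+BZ),
\]
and multiplying on the left by $G^{-\top}$ and on the right by $G^{-1}$, with $P:=Q^{-1}$, yields the desired $(A+BK)^\top P(A+BK)\prec P$.

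The algebra is routine; the main conceptual hurdle is recognizing the role of $G$ as a slack variable and deploying the completion-of-squares inequality $G^\top Q^{-1}G\succeq G+G^\top-Q$, which simultaneously guarantees the invertibility of $G$ and bridges the extended LMI with the standard discrete-time Lyapunov inequality. A minor caveat is that if the characterization of $\mathcal{K}_{\rm all}$ in \eqref{eq: Kall} is read literally as also requiring $P\in\mathcal{S}$, a further argument would be necessary, since the constructed $P=Q^{-1}$ need not be sparse; otherwise mere Schur stability of $A+BK$ suffices.
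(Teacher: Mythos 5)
Your proof is correct and takes essentially the same route as the paper: $G=Q$ for the first inclusion, and for the second the standard argument behind Thm.~3 of \cite{de1999new} (the completion-of-squares bound $G^\top Q^{-1}G\succeq G+G^\top-Q$, invertibility of $G$ from the $(1,1)$ block, and a congruence by $G^{-1}$ to recover the discrete-time Lyapunov inequality with $P=Q^{-1}$), which the paper simply delegates to that citation. Your closing caveat about the literal $P\in\mathcal{S}$ requirement in \eqref{eq: Kall} is well observed but does not indicate a gap on your side: the paper's own subsequent proofs likewise conclude membership in $\mathcal{K}_{\rm all}$ from an unstructured Lyapunov matrix, so the intended reading is $P\succ O$ without the sparsity constraint.
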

\begin{proof}
        $\mathcal{K}_{\rm diag} \subset\mathcal{K}_{\rm ext}$ is shown by setting $G=Q$.
        $\mathcal{K}_{\rm ext}\subset \mathcal{K}_{\rm all}$ is shown by transforming Thm. 3 in \cite{de1999new}.
\end{proof}

This formulation removes the block-diagonal constraint on the Lyapunov matrix $P=Q^{-1}$ from $\mathcal{K}_{\rm diag}$ in \eqref{lem: extended lmi} by introducing an auxiliary matrix $G$,
while ensuring $ZG^{-1}\in\mathcal{S}$ by the restrictions $Z\in\mathcal{S}$ and $G={\rm blkdiag} (Q_1,\dots,Q_N)$. 
However, $\mathcal{K}_{\rm ext}$ still brings some degree of conservatism, that is, there is a gap between $\mathcal{K}_{\rm ext}$ and $\mathcal{K}_{\rm all}$.

Another possible way to mitigate the conservatism brought by the structural constraint $ZQ^{-1}\in\mathcal{S}$ is by exploiting the graph structure. 
Our prior work on continuous-time systems \cite{watanabe2024convex} derived a new convex solution set by decomposing the Lyapunov matrix and system matrices in a clique-wise manner. 
A discrete-time version of this formulation can also be derived, as shown in Section~\ref{sec: preliminaries}. 
We denote its solution set as $\mathcal{K}_\mathcal{S}$. 
Although $\mathcal{K}_\mathcal{S}$ is less conservative than $\mathcal{K}_{\rm diag}$, it still exhibits some degree of conservatism, as the Lyapunov matrices are structurally restricted in sparsity.

Note that less conservative solution sets $\mathcal{K}_\mathcal{S}$ and $\mathcal{K}_{\rm ext}$ do not include one another, and we cannot determine which is better.
Our goal in this paper is to find a less conservative convex relaxation of \eqref{eq: Kall} compared to both sets, denoted by $\mathcal{K}_{\mathcal{S}, {\rm ext}}$, as shown in Fig.~\ref{fig: benn}.
In other words, we aim to formulate a convex optimization problem that yields a solution to the following:
\begin{prob}\label{prob 1}
        Consider a linear time-invariant system \eqref{eq: sys} with a static state feedback controller as in \eqref{eq: FB} for an undirected graph $\mathcal{G}$ and the set $\mathcal{S}$ in \eqref{eq: sparse}. 
        Derive a solution set $\mathcal{K}_{\mathcal{S}, {\rm ext}}\subset\mathcal{K}_{\rm all}$ such that $\mathcal{K}_\mathcal{S} \cup \mathcal{K}_{\rm ext}\subset \mathcal{K}_{\mathcal{S}, {\rm ext}}$.
\end{prob}
\begin{figure}[h]
        \begin{center}
                \includegraphics[width = 4.4cm]{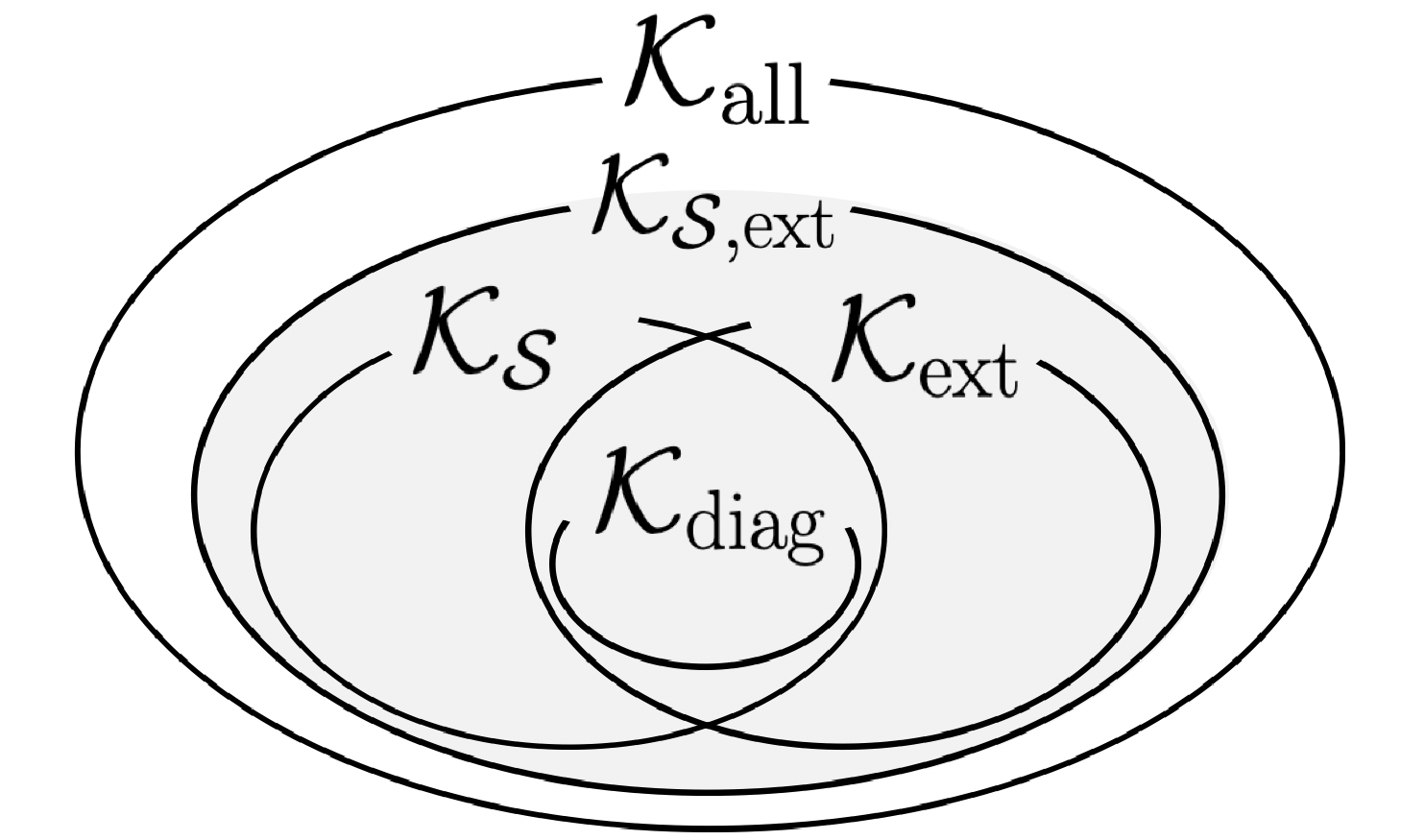}
                \caption{{\small Solution sets}}\label{fig: benn}
        \end{center}
\end{figure}    
\vspace{-0.8em}


\section{PRELIMINARIES}\label{sec: preliminaries}
Consider an undirected graph \(\mathcal{G} = (\mathcal{N}, \mathcal{E})\) for a node set $\mathcal{N}$ and edge set $\mathcal{E}$. 
A clique is a set of nodes that induces a complete subgraph of \(\mathcal{G}\). 
We denote cliques as \(\mathcal{C}_1, \mathcal{C}_2, \dots\) and their index set as \(\mathcal{Q}_\mathcal{G}^{\rm all}\). 
A clique is maximal if it is not included in any other clique. 
The index set for maximal cliques is denoted as \(\mathcal{Q}_\mathcal{G}^{\rm max} \subset \mathcal{Q}_\mathcal{G}^{\rm all}\). 
For a subset \(\mathcal{Q}_\mathcal{G} \subset \mathcal{Q}_\mathcal{G}^{\rm all}\), \(\mathcal{Q}_\mathcal{G}^i\) denotes the subset of cliques in \(\mathcal{Q}_\mathcal{G}\) that contains node \(i\). 
The number of the elements in $\mathcal{Q}_\mathcal{G}^i$ is represented by \(|\mathcal{Q}_\mathcal{G}^i|\).
A graph is said to be chordal if every cycle of length four or more has a chord.

For $\mathcal{Q}_\mathcal{G}$ and clique $\mathcal{C}_k, k\in \mathcal{Q}_\mathcal{G}$, we define matrix $E$ as
\begin{equation}
        \begin{aligned}
                E &= [\dots, E_{\mathcal{C}_k}^\top,\dots]^\top \in \mathbb{R}^{(\Sigma_{k\in\mathcal{Q}_\mathcal{G}} n_{\mathcal{C}_k})\times n},\ k\in\mathcal{Q}_\mathcal{G},\\
                E_{\mathcal{C}_k}&= [\dots, E_j^\top, \dots]^\top \in \mathbb{R}^{n_{\mathcal{C}_k}\times n},\ j\in \mathcal{C}_k,\\
                E_j & = [O_{n_j\times n_1},\dots,I_{n_j},\dots,O_{n_j\times n_N}]\in\mathbb{R}^{n_j\times n},
        \end{aligned}\label{eq: E}
\end{equation}
where $n_{\mathcal{C}_k}=\Sigma_{j\in\mathcal{C}_k} n_j$.
Matrix $E$ satisfies
\begin{itemize}
        \item[a)] $E$ is full column rank.
        \item[b)] $E^\top E={\rm blkdiag}(|\mathcal{Q}_\mathcal{G}^1|I_{n_1},\dots,|\mathcal{Q}_\mathcal{G}^N|I_{n_N})\succ O$.
\end{itemize}

The following factorization of sparse positive difinite matrices is obtained by leveraging the matrix $E$.
\begin{lem}[\cite{watanabe2024convex}] \label{lem: Agler}
        Consider undirected graph $\mathcal{G}=(\mathcal{N}, \mathcal{E})$ with clique index set $\mathcal{Q}_\mathcal{G}=\{1,\dots,q\}$.
        For $\tilde{P}={\rm blkdiag}(\tilde{P}_1,\dots\tilde{P}_q)\succ O$ with $\tilde{P}_k \in \mathbb{R}^{n_{\mathcal{C}_k}\times n_{\mathcal{C}_k}}$,
        the matrix $P=E^\top \tilde{P}E=\Sigma_{k=1}^q E^\top_{\mathcal{C}_k} \tilde{P}_k E_{\mathcal{C}_k}$ is positive definite and belongs to $\mathcal{S}$.
\end{lem}

The next assumption on graph structure is required for the block-diagonal factorization of sparse matrices.
This assumption means that all the nodes and edges are covered with cliques in $\mathcal{Q}_\mathcal{G}$ and can hold by appropriate choice of $\mathcal{Q}_\mathcal{G}$ for any graph.
\begin{assumption}\label{assumption graph}
        For an undirected communication graph $\mathcal{G}$ and the clique index set $\mathcal{Q}_\mathcal{G}=\{1,\dots,q\}$, $\mathcal{Q}^i_\mathcal{G}$ satisfies
        \begin{itemize}
                \item $\mathcal{Q}_{\mathcal{G}}^i \neq \emptyset$ for all $i \in \mathcal{N}$.
                \item $\mathcal{Q}_{\mathcal{G}}^i \cap \mathcal{Q}_{\mathcal{G}}^j \neq \emptyset \iff (i, j) \in \mathcal{E}$.
        \end{itemize}
\end{assumption}

The next lemma shows that for all undirected graphs, every matrix $K\in\mathcal{S}$ has a block-diagonal factorization.
\begin{lem}[\cite{watanabe2024convex}] \label{lem: StoBlkdiag}
        Suppose Assumption \ref{assumption graph} holds. Then, the following transformation exists.
        \begin{equation*}
                \begin{aligned}
                        \mathcal{S}=\{E^\top \tilde{G}E:&\tilde{G}={\rm blkdiag}(\dots,\tilde{G}_k,\dots),\\
                        & \ \tilde{G}_k\in\mathbb{R}^{n_{\mathcal{C}_k}\times n_{\mathcal{C}_k}},k\in\mathcal{Q}_\mathcal{G}\}
                \end{aligned}
        \end{equation*}
\end{lem}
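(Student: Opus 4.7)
The plan is to prove the set equality by establishing both inclusions separately, since they have qualitatively different flavors. The forward inclusion ($\supseteq$) is a sparsity-pattern calculation, whereas the reverse inclusion ($\subseteq$) demands an explicit construction of the block-diagonal factor $\tilde{G}$ from a given $K\in\mathcal{S}$.

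For the inclusion $\{E^\top \tilde{G}E\}\subseteq\mathcal{S}$, I would expand
$E^\top\tilde{G}E=\sum_{k\in\mathcal{Q}_\mathcal{G}}E_{\mathcal{C}_k}^\top\tilde{G}_k E_{\mathcal{C}_k}$ and observe that the $(i,j)$ block of each summand $E_{\mathcal{C}_k}^\top\tilde{G}_k E_{\mathcal{C}_k}$ is nonzero only when both $i,j\in\mathcal{C}_k$. Consequently, the $(i,j)$ block of the total sum can be nonzero only if $\mathcal{Q}_\mathcal{G}^i\cap\mathcal{Q}_\mathcal{G}^j\neq\emptyset$. By the second item in Assumption \ref{assumption graph}, this is equivalent to $(i,j)\in\mathcal{E}$ when $i\neq j$, which matches the sparsity pattern defining $\mathcal{S}$.

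For the reverse inclusion $\mathcal{S}\subseteq\{E^\top\tilde{G}E\}$, I would proceed constructively. Given $K\in\mathcal{S}$, denote by $\tilde{G}_{k,ij}$ the $(i,j)$ block of $\tilde{G}_k$ for $i,j\in\mathcal{C}_k$. The identity $K=\sum_k E_{\mathcal{C}_k}^\top\tilde{G}_k E_{\mathcal{C}_k}$ reduces blockwise to the linear system
\begin{equation*}
K_{ij}=\sum_{k\in\mathcal{Q}_\mathcal{G}^i\cap\mathcal{Q}_\mathcal{G}^j}\tilde{G}_{k,ij}
\end{equation*}
for each pair $(i,j)$. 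Whenever $K_{ij}$ may be nonzero, i.e. when $i=j$ or $(i,j)\in\mathcal{E}$, Assumption \ref{assumption graph} guarantees $\mathcal{Q}_\mathcal{G}^i\cap\mathcal{Q}_\mathcal{G}^j\neq\emptyset$, so the sum has at least one free term and a valid assignment exists; a concrete choice is to pick any $k^*\in\mathcal{Q}_\mathcal{G}^i\cap\mathcal{Q}_\mathcal{G}^j$, set $\tilde{G}_{k^*,ij}=K_{ij}$, and zero out the remaining summands. For pairs with $K_{ij}=O$ (including the case $\mathcal{Q}_\mathcal{G}^i\cap\mathcal{Q}_\mathcal{G}^j=\emptyset$), setting all relevant entries to zero trivially works.

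The main obstacle is ensuring the blockwise construction is globally consistent. This reduces to observing that the constraints attached to distinct pairs $(i,j)$ involve disjoint collections of unknowns $\tilde{G}_{k,ij}$ and can therefore be solved independently, and that Assumption \ref{assumption graph} supplies the non-emptiness needed for every constraint associated with a potentially nonzero block of $K$. The first item of the assumption handles the diagonal pairs $i=j$, while the second handles the off-diagonal edges of $\mathcal{G}$, together covering all support entries of $K\in\mathcal{S}$.
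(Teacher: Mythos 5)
Your proof is correct: both inclusions are argued soundly, the blockwise identity $K_{ij}=\sum_{k\in\mathcal{Q}_\mathcal{G}^i\cap\mathcal{Q}_\mathcal{G}^j}\tilde{G}_{k,ij}$ is the right reduction, and the observation that distinct pairs $(i,j)$ involve disjoint unknowns settles the consistency of the construction. The paper itself gives no proof of this lemma --- it is quoted from \cite{watanabe2024convex} --- so there is nothing internal to compare against, but your argument is the natural one for this factorization (the only cosmetic remark is that the forward inclusion needs only the fact that a clique containing $i\neq j$ forces $(i,j)\in\mathcal{E}$, which holds by definition of a clique; the real content of Assumption \ref{assumption graph} is consumed, exactly as you use it, in the reverse inclusion).
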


Also, the following lemma plays a significant role in transformations of matrix inequalities:
\begin{lem}[Finsler's lemma \cite{de2007stability}]
        Let $x\in \mathbb{R}^n$, $Q=Q^\top \in \mathbb{R}^{n\times n}$, and $M\in \mathbb{R}^{r\times n}$
        such that ${\rm rank}(M)<n$. The following statements are equivalent:
        \begin{itemize}
                \item[1)] $x^\top Qx <0, \ \forall x\neq 0\ {\rm s.t.}\ Mx = 0$.
                \item[2)] $M^{\bot \top}QM^{\bot}\prec O$.
                \item[3)] $\exists \rho \in \mathbb{R} \ {\rm s.t.} \ Q+\rho M^\top M \prec O$.
                \item[4)] $\exists X\in\mathbb{R}^{n\times r} \ {\rm s.t.}\ Q+M^\top X^\top +XM \prec O$. \\
        \end{itemize}
\end{lem}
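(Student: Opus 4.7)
The plan is to establish the equivalence via $(1)\Leftrightarrow(2)$ together with the cycle $(2)\Rightarrow(3)\Rightarrow(4)\Rightarrow(1)$. Three of these four directions are short algebraic observations, and the only real work lies in $(2)\Rightarrow(3)$.

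For $(1)\Leftrightarrow(2)$, I use the fact that the columns of $M^\bot$ form a basis of $\ker M$, so every $x$ with $Mx=0$ can be written uniquely as $x=M^\bot y$, with $x\neq 0$ iff $y\neq 0$; the quadratic form then rewrites as $x^\top Qx=y^\top(M^{\bot\top}QM^\bot)y$, and negativity for all such $x\neq 0$ is exactly negative definiteness of $M^{\bot\top}QM^\bot$. For $(3)\Rightarrow(4)$, the substitution $X=(\rho/2)M^\top$ gives $XM+M^\top X^\top=\rho M^\top M$, so the matrix inequality in (4) specializes to the one in (3). For $(4)\Rightarrow(1)$, if $Mx=0$ the cross terms satisfy $x^\top XMx=0$ and $x^\top M^\top X^\top x=0$, leaving $x^\top Qx=x^\top(Q+M^\top X^\top+XM)x<0$.

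The main obstacle is $(2)\Rightarrow(3)$, which I would handle by contradiction and a compactness argument on the unit sphere. Suppose (3) fails; then for every positive integer $k$, taking $\rho=-k$ produces a unit vector $x_k\in\mathbb{R}^n$ with $x_k^\top(Q-kM^\top M)x_k\ge 0$, i.e., $x_k^\top Qx_k\ge k\|Mx_k\|^2\ge 0$. Since $x_k^\top Qx_k$ is bounded above by $\|Q\|$, this forces $\|Mx_k\|^2\le\|Q\|/k\to 0$. By compactness of the unit sphere, a subsequence $x_{k_j}$ converges to a unit vector $x^*$ satisfying $Mx^*=0$ and $x^{*\top}Qx^*\ge 0$, contradicting (1).

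A more constructive alternative, if one wants to avoid a non-algorithmic compactness step, is to change basis along $\ker M$ and its orthogonal complement, express $Q$ and $M^\top M$ as $2\times 2$ block matrices in that basis, and apply a Schur complement: (2) gives negative definiteness of one diagonal block as is, and taking $\rho$ negative with $|\rho|$ sufficiently large makes the complementary block (which involves a positive definite multiple of a Gram-type matrix coming from $M$) dominate the bounded off-diagonal contributions, thereby yielding $Q+\rho M^\top M\prec O$.
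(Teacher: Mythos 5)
Your proposal is correct. Note first that the paper does not prove this lemma at all --- it is quoted verbatim from the cited reference \cite{de2007stability} --- so there is no in-paper proof to compare against; what you have written is essentially the standard argument from that reference. Your cycle $(1)\Leftrightarrow(2)$, $(2)\Rightarrow(3)\Rightarrow(4)\Rightarrow(1)$ is logically complete, and each easy step checks out: the parametrization $x=M^\bot y$ of $\ker M$ for $(1)\Leftrightarrow(2)$, the substitution $X=(\rho/2)M^\top$ for $(3)\Rightarrow(4)$, and the vanishing of the cross terms on $\ker M$ for $(4)\Rightarrow(1)$. The compactness argument for $(2)\Rightarrow(3)$ is also sound (it is really $(1)\Rightarrow(3)$, which is fine since you have $(1)\Leftrightarrow(2)$ in hand): from $x_k^\top Qx_k\ge k\|Mx_k\|^2$ and the bound $x_k^\top Qx_k\le\|Q\|$ you correctly force a limit point $x^*$ on the unit sphere with $Mx^*=0$ and $x^{*\top}Qx^*\ge0$. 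Your ``constructive alternative'' via a change of basis adapted to $\ker M$ and a Schur complement is in fact closer to the proof given in \cite{de2007stability}; the compactness route is slightly more elementary but non-quantitative (it gives no estimate on how negative $\rho$ must be), whereas the block/Schur route yields an explicit threshold. One cosmetic caveat: the lemma as printed has inconsistent dimensions ($M\in\mathbb{R}^{n\times r}$ acting on $x\in\mathbb{R}^n$, and a definition of $M^\bot$ via $(M^\bot)^\top M=O$ that refers to the column space rather than the kernel); you silently adopt the standard convention that the columns of $M^\bot$ span $\ker M$, which is the intended reading and the one under which your proof is valid.
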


Next, we provide the explicit formulation of $\mathcal{K}_\mathcal{S}$, the solution set for the discrete-time version of the clique-wise decomposition method \cite{watanabe2024convex}.
This formulation constrains the Lyapunov matrix to satisfy a sparsity pattern, yet it yields a larger solution set than $\mathcal{K}_{\rm diag}$ in \eqref{eq: Kdiag}.
Note that there is no inclusion relationship between this set and $\mathcal{K}_{\rm ext}$ in \eqref{eq: Kext}.
\begin{lem}\label{lem: 1 linear}
        Assume Assumption \ref{assumption graph} holds. Let
        \begin{align}
                \mathcal{K}_\mathcal{S} =\{&K=(E^\top E)^{-1}E^\top \tilde{Z}\tilde{Q}^{-1}E\notag\\
                &=(E^\top E)^{-1}\sum_{k=1}^{q}E_{\mathcal{C}_k}^\top (\tilde{Z}_k\tilde{Q}_k^{-1}) E_{\mathcal{C}_k}\in\mathcal{S}:\notag\\
                &\exists \tilde{Z}={\rm{blkdiag}}(\tilde{Z}_1,\cdots\tilde{Z}_q),\notag\\
                &\tilde{Q}={\rm{blkdiag}}(\tilde{Q}_1,\cdots\tilde{Q}_q)\succ O,
                \rho\in\mathbb{R}, \ \eta > 0,\notag\\
                &{\rm{s.t.}} \ 
                \begin{bmatrix}
                        \tilde{Q}&\ast\\
                        \tilde{A}\tilde{Q}+\tilde{B}\tilde{Z}&\tilde{Q}
                \end{bmatrix}
                +\rho W\succ O,\notag \\ 
                & \hspace{1.6em} U\tilde{Q}+\tilde{Q}U \succeq \eta U\},\label{eq: thm2}
        \end{align} 
        where
        \begin{align}
            W=&{\rm blkdiag}(U,U), \ U = I-E^\top(E^\top E)^{-1}E \label{FinsM},\\
            \tilde{A} =& E A (E^\top E)^{-1} E^\top, \ \tilde{B} = E B (E^\top E)^{-1} E^\top.
        \end{align} Then $\mathcal{K}_{\rm diag} \subset \mathcal{K}_\mathcal{S} \subset \mathcal{K}_{\rm all}$ holds.
\end{lem}
\begin{proof}
        The proof follows a similar procedure as in \cite{watanabe2024convex}; see Appendix~\ref{App: A}. for the full proof.
\end{proof}

Here, the Lyapunov matrix $P = E^\top \tilde{Q}^{-1} E$ is structurally constrained in sparsity, $P \in \mathcal{S}$, as established in Lemma \ref{lem: Agler}. 
In our main result, we aim to remove structural constraints on $\tilde{Q}$, and consequently on the Lyapunov matrix $P$.

\section{MAIN RESULT}\label{sec: main result}
In this section, we present solutions to Problem \ref{prob 1}.
Consider system \eqref{eq: sys} with distributed state feedback \eqref{eq: FB}.
In the following, for the communication graph $\mathcal{G}$, we suppose that Assumption \ref{assumption graph} holds.
First, we present $\hat{\mathcal{K}}_{\mathcal{S}, {\rm ext}}$, a solution set to Problem \ref{prob 1}, characterized by non-convex optimization.
In this formulation, the block diagonal restriction is imposed on the auxiliary matrix $\tilde{G}$ rather than on $\tilde{Q}$, thereby removing the structural constraint on the Lyapunov matrix.
\begin{lem}\label{thm: ext nonlinear}
        Let
        \begin{align}
                \hat{\mathcal{K}}_{\mathcal{S}, {\rm ext}}=\{&K=(E^\top E)^{-1}E^\top \tilde{Z}\tilde{G}^{-1}E\in\mathcal{S}: \ \exists \tilde{Q}\succ O\notag\\
                 &\tilde{Z}={\rm{blkdiag}}(\tilde{Z}_1,\cdots\tilde{Z}_q),\ \rho\in\mathbb{R}\notag\\
                 &\tilde{G}={\rm{blkdiag}}(\tilde{G}_1,\cdots\tilde{G}_q)\notag\\
                 &{\rm{s.t.}} \ 
                 \Psi (\tilde{G}, \tilde{Q}, \tilde{Z})
                 +\rho \tilde{H}^\top W\tilde{H}\succ O\}\label{eq: ext nonlinear}
        \end{align}
        with $W$ in \eqref{FinsM} and
        \begin{align}
            \Psi (\tilde{G}, \tilde{Q}, \tilde{Z})&=
            \begin{bmatrix}
              \tilde{G}+\tilde{G}^\top - \tilde{Q}&\ast\\
              \tilde{A}\tilde{G}+\tilde{B}\tilde{Z}&\tilde{Q}\\
            \end{bmatrix}, \label{psi}\\
            \tilde{H} \ \ \ &={\rm blkdiag}(\tilde{G},\tilde{G}).\label{eq: H}
        \end{align}
        Then, $\mathcal{K}_\mathcal{S}\cup \mathcal{K}_{\rm ext}\subset\hat{\mathcal{K}}_{\mathcal{S}, {\rm ext}}\subset \mathcal{K}_{\rm all}$ holds.
\end{lem}
\begin{proof}
        First, we prove that $\hat{\mathcal{K}}_{\mathcal{S}, {\rm ext}} \subset \mathcal{K}_{\rm all}$.
        Consider $K = (E^\top E)^{-1} E^\top \tilde{Z} \tilde{G}^{-1} E \in \hat{\mathcal{K}}_{\mathcal{S}, {\rm ext}}$.
        Here, $K$ can be constructed since $\tilde{G}$ is nonsingular, as shown below.
       From the block-diagonal elements of the matrix inequality in \eqref{eq: ext nonlinear}, we obtain 
       $\tilde{G}+\tilde{G}^\top-\tilde{Q}+\rho\tilde{G}^{\top}U\tilde{G}\succ O$ and 
       $\tilde{Q}+\rho\tilde{G}^{\top}U\tilde{G}\succ O$. 
       Combining these matrix inequalities yields
        \begin{align}
                O &\prec \tilde{G}+\tilde{G}^\top+2\rho\tilde{G}^{\top}U\tilde{G}\notag\\
                & = (I+\rho\tilde{G}^{\top}U)\tilde{G}+\tilde{G}^\top (I+\rho\tilde{G}^{\top}U)^\top.\label{eq: tildeG nonsingular}
        \end{align}
       \eqref{eq: tildeG nonsingular} shows the nonsingularity of $\tilde{G}$ and $\tilde{H}$ in \eqref{eq: ext nonlinear}; otherwise, for a non-zero vector $x\in \ker{\tilde{G}}$, pre- and post-multiplying \eqref{eq: tildeG nonsingular} by $x^\top$ and $x$, respectively, would yield zero.

       Then, by pre- and post-multiplying the matrix inequality in \eqref{eq: ext nonlinear} with $\tilde{H}^{-\top}$ and $\tilde{H}^{-1}$, we get
       \begin{align}
        O\prec&\ \tilde{H}^{-\top}\Psi(\tilde{G}, \tilde{Q}, \tilde{Z})\tilde{H}^{-1} +\rho W\notag\\
        =&\ 
        \begin{bmatrix}
                \tilde{G}^{-\top}+\tilde{G}^{-1} - \tilde{G}^{-\top}\tilde{Q}\tilde{G}^{-1}&\hspace{-1em}\ast\\
                \tilde{G}^{-\top}(\tilde{A}+\tilde{B}\tilde{K})&\hspace{-1.5em}\tilde{G}^{-\top}\tilde{Q}\tilde{G}^{-1}\\
        \end{bmatrix}
        +\rho W,\label{eq: Lemma 6 eq 15}
       \end{align}
       where $\tilde{K}=\tilde{Z}\tilde{G}^{-1}$.
       We can transform \eqref{eq: Lemma 6 eq 15} by using Finsler's lemma and setting $X= E^\top\tilde{G}^{-1}E, P=E^\top\tilde{G}^{-\top}\tilde{Q}\tilde{G}^{-1}E$.
       Here, $X$ is nonsingular, because $\tilde{G}$ is nonsingular and $E$ has full column rank.
       \begin{align}
        &
        \begin{bmatrix}
                E^\top \ O \\
                O \ E^\top
        \end{bmatrix}
        \begin{bmatrix}
                \tilde{G}^{-\top}+\tilde{G}^{-1} - \tilde{G}^{-\top}\tilde{Q}\tilde{G}^{-1}&\hspace{-1em}\ast\\
                \tilde{G}^{-\top}(\tilde{A}+\tilde{B}\tilde{K})&\hspace{-1.5em}\tilde{G}^{-\top}\tilde{Q}\tilde{G}^{-1}\\
        \end{bmatrix}
        \begin{bmatrix}
                E \ O \\
                O \ E
        \end{bmatrix}\notag\\
        = &
        \begin{bmatrix}
                X^\top +X-P&\ast\\
               X^\top(A+BK)&P\\
        \end{bmatrix} \
        \succ\ O. \label{eq: ext1}
       \end{align}
       Note that $W^\bot = {\rm blkdiag}(E^\top, E^\top)$ is used for the Finsler's lemma. 
       Here, \eqref{eq: ext1} shares a similar structure with the extended LMI \cite{de2002extended} and can likewise be transformed by utilizing the following inequality,
       \begin{align}
        &(P-X^\top)^\top P^{-1} (P-X^\top) \succeq O\notag\\
        \Leftrightarrow\ &
        XP^{-1}X^\top \succeq X+X^\top -P,
       \end{align}
       allowing a transformation of \eqref{eq: ext1} as follows:
       \begin{align*}
        &
        \begin{bmatrix}
                XP^{-1}X^\top&\ast\\
               X^\top(A+BK)&P\\
        \end{bmatrix}\succ O \notag\\
        \Leftrightarrow\ &
        \begin{bmatrix}
                XP^{-1}X^\top&\ast\\
               A+BK&X^{-\top} PX^{-1}\\
        \end{bmatrix}\succ O \notag\\
        \Leftrightarrow\ &
        (A+BK)^\top XP^{-1}X^\top (A+BK)- XP^{-1}X^\top \prec O.
       \end{align*}
       Since $P^{-1}\succ O$,
       $XP^{-1}X^\top$
       is a positive definite Lyapunov matrix. Therefore, $K\in \mathcal{K}_{\rm all}$ and $\hat{\mathcal{K}}_{\mathcal{S}, {\rm ext}}\subset \mathcal{K}_{\rm all}$ hold.

       The proof of $\mathcal{K}_\mathcal{S}\cup \mathcal{K}_{\rm ext}\subset\hat{\mathcal{K}}_{\mathcal{S}, {\rm ext}}$ is omitted, as it follows directly from the next theorem.
\end{proof}

Next, the following Theorem gives our main result, convexly characterized solution to Problem \ref{prob 1}.
This theorem is derived as a convex relaxation of Lemma \ref{thm: ext nonlinear}.
\begin{thm}\label{thm: ext linear}
        Let
        \begin{align}
                \mathcal{K}_{\mathcal{S}, {\rm ext}}=\{&K=(E^\top E)^{-1}E^\top \tilde{Z}\tilde{G}^{-1}E\in\mathcal{S}: \ \exists \tilde{Q}\succ O\notag\\
                 &\tilde{Z}={\rm{blkdiag}}(\tilde{Z}_1,\cdots\tilde{Z}_q),\notag\\
                 &\tilde{G}={\rm{blkdiag}}(\tilde{G}_1,\cdots\tilde{G}_q), \notag \\ 
                 &\rho\in\mathbb{R}, \ \eta >0 \ {\rm{s.t.}}\notag\\ 
                 &\Psi (\tilde{G}, \tilde{Q}, \tilde{Z})
                 +\rho W\succ O,\notag \\
                 &\tilde{G}^\top U+U\tilde{G}\succeq \eta U\} \label{eq: stb ext rlx}
        \end{align}
        with $W, U$ in \eqref{FinsM} and $\Psi$ in \eqref{psi}.
        Then, $\mathcal{K}_\mathcal{S}\cup \mathcal{K}_{\rm ext}\subset\mathcal{K}_{\mathcal{S}, {\rm ext}}\subset \mathcal{K}_{\rm all}$ holds.
\end{thm}
\begin{proof}
        First, we prove $\mathcal{K}_{\mathcal{S}, {\rm ext}}\subset \hat{\mathcal{K}}_{\mathcal{S}, {\rm ext}}$ to show $\mathcal{K}_{\mathcal{S}, {\rm ext}}\subset \mathcal{K}_{\rm all}$.
        Consider $K=(E^\top E)^{-1}E^\top \tilde{Z}\tilde{G}^{-1}E\in \mathcal{K}_{\mathcal{S}, {\rm ext}}$.
        We first demonstrate the nonsingularity of $\tilde{G}$ to show that $K$ can be constructed. From the upper-left block-diagonal element of $\Psi + \rho W \succ O$ in \eqref{eq: stb ext rlx}, we obtain
        \begin{align*}
                O\prec \tilde{Q}&\prec\tilde{G}+\tilde{G}^\top +\rho U\\
                &\preceq \tilde{G}+\tilde{G}^\top + {|\rho|}/{\eta}(\tilde{G}^\top U + U\tilde{G})\\
                &= ({|\rho|}/{\eta}\ U+I)\tilde{G}+\tilde{G}^\top({|\rho|}/{\eta}\ U+I),
        \end{align*}
        which proves the nonsingularity of $\tilde{G}$, and thus of $\tilde{H}$ in \eqref{eq: H}.
        Then from the LMI conditions in \eqref{eq: stb ext rlx}, 
        \begin{align*}
                O\prec\ & \Psi (\tilde{G}, \tilde{Q}, \tilde{Z})+|\rho| W\\
                \preceq\ & \Psi (\tilde{G}, \tilde{Q}, \tilde{Z})+|\rho|/\eta\  (\tilde{H}^\top W+W\tilde{H})\\
                \Leftrightarrow O\prec\ &
                \tilde{H}^{-\top}\Psi (\tilde{G}, \tilde{Q}, \tilde{Z})\tilde{H}^{-1}+|\rho|/\eta\  (W\tilde{H}^{-1}+\tilde{H}^{-\top} W)
        \end{align*}
        is achieved.
        By applying transformations from statements 4) to 3) of Finsler's lemma, we obtain
        \begin{equation}
            O\prec \tilde{H}^{-\top}\Psi (\tilde{G}, \tilde{Q}, \tilde{Z})\tilde{H}^{-1} + \mu W^{\top}W\label{eq: H Psi H}
        \end{equation}
        for some $\mu \in \mathbb{R}$.
        Since $W^{\top}W=W$, pre- and post-multiplying \eqref{eq: H Psi H} with $\tilde{H}^{\top}$ and $\tilde{H}$ yields the matrix inequality for $\hat{\mathcal{K}}_{\mathcal{S}, {\rm ext}}$ in \eqref{eq: ext nonlinear}, implying that $\mathcal{K}_{\mathcal{S}, {\rm ext}} \subset \hat{\mathcal{K}}_{\mathcal{S}, {\rm ext}}$.
    
        Next, for $\mathcal{K}_\mathcal{S}\subset \mathcal{K}_{\mathcal{S}, {\rm ext}}$, one can choose $\tilde{G}=\tilde{G}^\top=\tilde{Q}$ to recover the LMIs in \eqref{eq: thm2}.
    
        Lastly, we prove $\mathcal{K}_{\rm ext}\subset \mathcal{K}_{\mathcal{S}, {\rm ext}}$.
        For $K\in \mathcal{K}_{\rm ext}$, we have $K=ZG^{-1}$ with some $Z\in\mathcal{S}$ and $G={\rm blkdiag}(G_1,\dots,G_N)$.
        From Lemma \ref{lem: StoBlkdiag}, a block-diagonal matrix $\tilde{K}$ exists such that $K = (E^\top E)^{-1} E^\top \tilde{K} E$.
        We define $\tilde{G} = \text{blkdiag}(\cdots, \tilde{G}_k, \cdots)$, where $\tilde{G}_k = \text{blkdiag}(\cdots, |\mathcal{Q}_\mathcal{G}^j| G_j, \cdots)$ and $j \in \mathcal{C}_k$.        
        Then, $\tilde{G}$ is nonsingular because $|\mathcal{Q}_\mathcal{G}^j|\neq 0$ and $G+G^\top \succ O$. 
        By setting $\tilde{Z} = \tilde{K} \tilde{G}$, we obtain $(E^\top E)^{-1} E^\top \tilde{Z} \tilde{G}^{-1} E = (E^\top E)^{-1} E^\top \tilde{Z} E (E^\top E)^{-1} G^{-1} = Z G^{-1} = K$,
        since $Z$ can be represented as $Z = (E^\top E)^{-1} E^\top \tilde{Z} \tilde{G}^{-1} E \in \mathcal{S}$ from Lemma \ref{lem: StoBlkdiag}, and $\tilde{G}^{-1} E = E (E^\top E)^{-1} G^{-1}$.
        Since $E$ has full column rank, for any positive definite matrix $P$, there exists a positive definite matrix $\tilde{P}$ such that $P = E^\top \tilde{P} E$.
        Therefore, for $G^{-\top}QG^{-1}\succ O$, there exists $\tilde{Q}\succ O$ such that $G^{-\top}QG^{-1}=E^\top\tilde{G}^{-\top}\tilde{Q}\tilde{G}^{-1}E$. 
        Using these transformations, the LMI in \eqref{eq: Kext} can be reformulated as follows:
        \begin{align*}
                O\prec&
                \begin{bmatrix}
                        G^{-\top}+G^{-1}-G^{-\top}QG^{-1}&\ast\\
                        G^{-\top}(A+BK)&G^{-\top}QG^{-1}
                \end{bmatrix}\\
                =&
                \begin{bmatrix}
                        \hspace{-3em}E^\top (\tilde{G}^{-\top}+ \tilde{G}^{-1}-\tilde{G}^{-\top}\tilde{Q}\tilde{G}^{-1})E\hspace{3em} \ast\hspace{3em}\\
                        \hspace{1em}E^\top \tilde{G}^{-\top}E(A+BK)\hspace{4em} E^\top\tilde{G}^{-\top}\tilde{Q}\tilde{G}^{-1}E
                \end{bmatrix}\\
                =&
                \begin{bmatrix}
                        E^\top O \\
                        O \ E^\top
                \end{bmatrix}
                \begin{bmatrix}
                        \tilde{G}^{-\top}+\tilde{G}^{-1} - \tilde{G}^{-\top}\tilde{Q}\tilde{G}^{-1}&\hspace{-1.5em}\ast\\
                        \hspace{-1em}\tilde{G}^{-\top}(\tilde{A}+\tilde{B}\tilde{K})&\hspace{-2.5em}\tilde{G}^{-\top}\tilde{Q}\tilde{G}^{-1}\\
                \end{bmatrix}
                \begin{bmatrix}
                        E \ O \\
                        O \ E
                \end{bmatrix}
        \end{align*}
        By using Finsler's lemma, we obtain
        \begin{align}
                O&\prec
                \begin{bmatrix}
                        \hspace{-2em}\tilde{G}^{-\top}+ \tilde{G}^{-1}-\tilde{G}^{-\top}\tilde{Q}\tilde{G}^{-1}\hspace{1em}\ast\hspace{1em}\\
                        \tilde{G}^{-\top}(\tilde{A}+\tilde{B}\tilde{K})\hspace{3em}\tilde{G}^{-\top}\tilde{Q}\tilde{G}^{-1}
                \end{bmatrix}
                +\nu W \label{eq: ext diag proof}
        \end{align}
        for some $\nu \in\mathbb{R}$. Since $\tilde{G}U=U\tilde{G} \Leftrightarrow \tilde{H}W=W\tilde{H}$ and $W=W^2$, we obtain
        \begin{align*}
                \eqref{eq: ext diag proof} \Leftrightarrow O&\prec 
                \Psi (\tilde{G}, \tilde{Q}, \tilde{Z})+\nu \tilde{H}^\top W^2\tilde{H}\notag \\
                &=
                \Psi (\tilde{G}, \tilde{Q}, \tilde{Z})+\nu W\tilde{H}^\top \tilde{H}W\notag\\
                &\preceq
                \Psi (\tilde{G}, \tilde{Q}, \tilde{Z})+\rho W,\notag
        \end{align*}
        for $\rho={\rm max}(\nu \ {\rm eig}(\tilde{H}^\top \tilde{H}))$.
        Moreover, $\tilde{G}^\top U+U\tilde{G}=\tilde{G}^{\top}U^2+U^2\tilde{G}=U(\tilde{G}^\top+\tilde{G})U\succeq \mu U$ holds, where $\mu$ is the minimum or maximum eigenvalue of $\tilde{G}^\top+\tilde{G}$.
        This satisfies the LMI in \eqref{eq: stb ext rlx}.
        Therefore, we obtain $K\in \mathcal{K}_{\mathcal{S}, {\rm ext}}$ and $\mathcal{K}_{\rm ext}\subset \mathcal{K}_{\mathcal{S}, {\rm ext}}$.
\end{proof}

\begin{rem}
        For Theorem 1, any arbitrary choice of \( \mathcal{Q}_\mathcal{G} \) that satisfies Assumption 1 will yield a larger solution set compared to the Extended LMI in (6).
        It is advisable to set $\mathcal{Q}_\mathcal{G} = \mathcal{Q}_\mathcal{G}^{\rm max}$, so that the dilated matrices $\tilde{G}$ in Theorem 1 have a structure closer to a full matrix. 
        Note that in distributed systems, each agent can compute the maximal cliques it belongs to with a computation time of $\mathcal{O}(3^{|\mathcal{N}_i|/3})$, where $|\mathcal{N}_i|$ denotes the number of neighboring agents \cite{sakurama2021generalized}.
        This is not a heavy computation burden for a sparse graph.
\end{rem}
\begin{rem}
        When $\mathcal{Q}_\mathcal{G}$ is chosen appropriately, the orders of the number of SDP variables are  $\mathcal{O}(nd)$, $\mathcal{O}(nd)$, $\mathcal{O}(n^2)$ and $\mathcal{O}(n^2d^2)$ for $\mathcal{K}_{\rm diag}$,$\mathcal{K}_\mathcal{S}$, $\mathcal{K}_{\rm ext}$ and $\mathcal{K}_{\mathcal{S}, {\rm ext}}$, respectively, where $d$ denotes the maximum number of neighboring agents.
        Here, the number of SDP variables of the proposed method $\mathcal{K}_{\mathcal{S}, {\rm ext}}$, $\mathcal{O}(n^2d^2)$, is the largest.
        However, for a large, sparse graph satisfying $n\gg d$, the number approximates $\mathcal{O}(n^2)$, the same as the number of SDP variables of the extended LMIs, $\mathcal{K}_{\rm ext}$.
        See Appendix~\ref{App: B}. for details.
\end{rem}


\section{Application to $H_\infty$ control}\label{sec: Hinf}
In this section, we present a solution to the $H_\infty$ version of Problem \ref{prob 1}.
Consider the following system, where $w\in\mathbb{R}^{m_w}$ and $y\in\mathbb{R}^l$ are the exogenous disturbance and the output, $B_w\in\mathbb{R}^{n\times m_v}$, $D_w\in\mathbb{R}^{l\times m_w}$, $C\in\mathbb{R}^{l\times n}$, and $D\in\mathbb{R}^{l\times m}$ are the disturbance, output and feedthrough matrices, respectively:
\begin{equation}\label{eq: Hsys}
        \begin{aligned}
                x(k+1)&=Ax(k)+Bu(k)+B_ww(k)\\
                y(k)&=Cx(k)+Du(k)+D_ww(k).
        \end{aligned}
\end{equation}
We consider the distributed \( H_\infty \) state-feedback control for the system \eqref{eq: Hsys}.
The set of all distributed control gains achieving the system's $H_\infty$ norm less than $\gamma$ is defined as:
\begin{equation}
        \mathcal{K}_{\rm all}^{\infty, \gamma}=\{K\in\mathcal{S}: ||T(z)||_\infty < \gamma\},
\end{equation}
where $T(z)=C(zI - (A+BK))^{-1} B_w + D_w$. 
Here, $H_\infty$ controller synthesis can be viewed as a minimization problem with the objective function $\gamma$ and the variable $K\in\mathcal{S}$. 
Similarly to Problem 1, we expect to find a convexly relaxed subset of $\mathcal{K}_{\rm all}^{\infty, \gamma}$ that is less conservative than the solution sets obtained by clique-wise decomposition or the extended LMI.

The following lemma characterizes the $H_\infty$ norm of the system \eqref{eq: Hsys}.
Here, statement 3) is the extended LMI version of the Bounded Real Lemma \cite{de2002extended}.
\begin{lem}[Bounded Real Lemma \cite{gahinet1994,de1999lmi,de2002extended}]\label{lem: BRL}
        The following statements are equivalent:
        \begin{itemize}
                \item[1)] $A+BK$ is Schur and $||T(z)||_\infty < \gamma$.
                \item[2)] $\exists P\succ O,\ {\rm s.t.}$
                \begin{equation}
                        \begin{bmatrix}
                                -P^{-1}&\ast&\ast&\ast\\
                                (A+BK)^\top&-P&\ast&\ast\\
                                B_w^\top&O&-\gamma I&\ast\\
                                O&C&D_w&-\gamma I\\
                        \end{bmatrix}
                        \prec O \label{BRL}
                \end{equation}
                \item[3)] $\exists P\succ O, G\ {\rm s.t.}$
                \begin{equation}
                        \begin{bmatrix}
                                -P&\ast&\ast&\ast\\
                                G^\top (A+BK)^\top&P-G-G^\top&\ast&\ast\\
                                B_w^\top&O&-\gamma I&\ast\\
                                O&CG&D_w&-\gamma I\\
                        \end{bmatrix}
                        \prec O \\ \label{Extended BRL}
                \end{equation}
        \end{itemize}
\end{lem}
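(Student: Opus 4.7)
The plan is to establish the three-way equivalence by proving $1)\Leftrightarrow 2)$ as the classical discrete-time Bounded Real Lemma and $2)\Leftrightarrow 3)$ via the extended-LMI manipulation of \cite{de2002extended}.

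For $1)\Leftrightarrow 2)$, this is the standard discrete-time BRL. For $2)\Rightarrow 1)$, I would apply a Schur complement on the $-P^{-1}$ block of \eqref{BRL} to obtain a dissipation inequality of the form $(A+BK)^\top P (A+BK)-P+\gamma^{-1}(\cdot)^\top(\cdot)\prec O$, from which $V(x)=x^\top P x$ serves as a storage function witnessing both Schur stability of $A+BK$ and, via Parseval's identity, the $H_\infty$ bound $\|T\|_\infty<\gamma$. For $1)\Rightarrow 2)$, I would invoke the discrete-time KYP lemma (or, equivalently, the stabilizing solution of the associated discrete algebraic Riccati equation) to produce the required $P\succ O$.

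For $2)\Leftrightarrow 3)$, the extended-LMI trick proceeds as follows. Starting from 3), the $(2,2)$ block yields $P-G-G^\top\prec O$, so $G+G^\top\succ P\succ O$, which in particular implies that $G$ is nonsingular. The algebraic identity $(P-G)^\top P^{-1}(P-G)\succeq O$ expands to $G^\top P^{-1}G\succeq G+G^\top-P$, so the $(2,2)$ block $P-G-G^\top$ in \eqref{Extended BRL} may be replaced by the tighter $-G^\top P^{-1}G$ without breaking the LMI direction. A congruence transformation by $\mathrm{blkdiag}(I,G^{-1},I,I)$ then eliminates $G$ from the off-diagonal blocks, and after a row/column permutation one recovers \eqref{BRL}, possibly in its dual form where $P$ and $P^{-1}$ at the diagonal blocks are interchanged (this dual form is equivalent to the primal since the $H_\infty$ norm of $T(z)$ equals that of $T(z)^\top$). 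Conversely, $2)\Rightarrow 3)$ is obtained by choosing $G=G^\top=P$, which reduces $P-G-G^\top$ to $-P$, so that \eqref{Extended BRL} becomes a congruence-equivalent reformulation of \eqref{BRL}.

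The main obstacle is the necessity direction $1)\Rightarrow 2)$: extracting the Lyapunov matrix $P$ from the frequency-domain condition $\|T\|_\infty<\gamma$ is non-elementary and inherently relies on KYP/Riccati theory rather than algebraic manipulation. By contrast, $2)\Leftrightarrow 3)$ is largely mechanical once the key identity $(P-G)^\top P^{-1}(P-G)\succeq O$ and the resulting nonsingularity of $G$ are identified; the only care point is bookkeeping the congruence and the primal--dual exchange so that the off-diagonal data $(A+BK,B_v,C,D_w)$ return to exactly the positions they occupy in \eqref{BRL}.
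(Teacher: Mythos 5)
The paper does not prove this lemma at all: it is quoted verbatim as a known result from \cite{gahinet1994, de2002extended}, so there is no in-paper argument to compare against. Your sketch reconstructs the standard proofs from exactly those references and is sound: $1)\Leftrightarrow 2)$ is the classical discrete-time bounded real lemma (storage-function/Schur-complement for sufficiency, KYP/Riccati for necessity), and $2)\Leftrightarrow 3)$ is the de Oliveira--Geromel--Bernussou dilation, whose two key ingredients --- nonsingularity of $G$ from $G+G^\top\succ P\succ O$, and the completion-of-squares bound $G^\top P^{-1}G\succeq G+G^\top-P$ followed by the congruence $\mathrm{blkdiag}(I,G^{-1},I,I)$ --- you identify correctly.

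Two small tidy-ups. First, after the congruence no row/column permutation is needed; the blocks stay in place and the only discrepancy with \eqref{BRL} is that $-P$ and $-P^{-1}$ sit on swapped diagonal positions. Since $P$ appears nowhere in the off-diagonal data, this ``dual'' form is converted to \eqref{BRL} by the substitution $P\mapsto P^{-1}$ under the existential quantifier; there is no need to invoke $\|T(z)\|_\infty=\|T(z)^\top\|_\infty$, which would in any case transpose the off-diagonal data $(A+BK,B_v,C,D_w)$ and is the wrong tool here. Second, for $2)\Rightarrow 3)$ the choice $G=P$ reproduces \eqref{Extended BRL} from the \emph{dual} form of \eqref{BRL} via the congruence $\mathrm{blkdiag}(I,P,I,I)$, so the same renaming $P\mapsto P^{-1}$ must be applied first; you flag this ``primal--dual exchange'' as a care point, and with the renaming argument it closes cleanly.
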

Using Lemma \ref{lem: BRL}, 2), $\mathcal{K}_{\rm all}^{\infty, \gamma}$ can be transformed to:
\begin{align*}
    \mathcal{K}_{\rm all}^{\infty, \gamma}=\{&K \in \mathcal{S}:\exists P\succ O,{\rm s.t.}\  \eqref{BRL}\}.
\end{align*}
Similarly to $\mathcal{K}_{\rm diag}$, the block-diagonal convex relaxation of $\mathcal{K}_{\rm all}^{\infty, \gamma}$ , denoted as $\mathcal{K}_{\rm diag}^{\infty, \gamma}$, is given as follows:
\begin{align}
    \mathcal{K}_{\rm diag}^{\infty, \gamma}=\{&K=ZQ^{-1}:\exists Z\in \mathcal{S},\notag \\
      &\ Q={\rm blkdiag}(Q_1,\cdots,Q_N)\succ O \ {\rm s.t.}\notag\\
     &
      \begin{bmatrix}
        \hspace{0em}-Q \hspace{3.5em} \ast&\ast&\ast\\
        \hspace{-1em}(AQ+BZ)^\top \hspace{0.5em} -Q&\ast&\ast\\
        \hspace{0em}B_w^\top \hspace{3.5em} O &-\gamma I&\ast\\
        \hspace{2em}O \hspace{2.5em} CQ+DZ&D_w&-\gamma I\\
      \end{bmatrix}
    \prec O\}.\label{eq: Hinf diag}
\end{align}

Similarly to $\mathcal{K}_{\rm ext}$ and $\mathcal{K}_\mathcal{S}$, $H_\infty$ versions of the Extended LMI method and the clique-wise decomposition method can be derived as follows.
\begin{lem}[Distributed version of \cite{de1999lmi, de2002extended}]\label{lem: extended lmi Hinf}
        Let
        \begin{align}
        \mathcal{K}_{\rm ext}^{\infty, \gamma}=\{&K=ZG^{-1}:, \ \exists Z\in \mathcal{S}, \ Q \succ O,\notag\\\notag
        &G ={\rm blkdiag}(G_1,\cdots,G_N) \ \ {\rm s.t.}\\\notag
        &
        \begin{bmatrix}
                -Q&\ast&\ast&\ast\\
                (AG+BZ)^\top&Q-G-G^\top&\ast&\ast\\
                B_w^\top&O&-\gamma I&\ast\\
                O&CG+DZ&D_w&-\gamma I\\
        \end{bmatrix}\\
        &\prec O
        \}.\label{K_ext,diag}
        \end{align}\label{eq: Hinf ext}
        Then, $\mathcal{K}_{\rm diag}^{\infty, \gamma}\subset\mathcal{K}_{\rm ext}^{\infty, \gamma}\subset\mathcal{K}_{\rm all}^{\infty, \gamma}$ holds.
\end{lem}
\begin{proof}
        This can be proved by following the similar procedure to Lemma \ref{lem: extended lmi}.
\end{proof}

\begin{lem}\label{thm: Hinf linear} 
        Let
        \begin{align}
          \mathcal{K}_\mathcal{S}^{\infty, \gamma}=\{&K=(E^\top E)^{-1}E^\top \tilde{Z}\tilde{Q}^{-1}E:\notag\\
          &\exists \tilde{Z}={\rm blkdiag}(\tilde{Z}_1,\cdots\tilde{Z}_q),\notag\\
          &\tilde{Q}={\rm blkdiag}(\tilde{Q}_1,\cdots,\tilde{Q}_q)\succ O, \notag\\
          &\rho \in \mathbb{R}, \eta > 0 \ {\rm s.t.}\notag\\
          &\Gamma_\gamma(\tilde{Q},\tilde{Z})+{\rm blkdiag}(\rho W, O_{2n\times 2n})\prec O,\notag\\
          & U\tilde{Q}+\tilde{Q}U \succeq \eta U\}\label{eq: Hinf Ks}
        \end{align}
        with $W$, $U$ in \eqref{FinsM},
        \begin{align}
                &\Gamma_\gamma(\tilde{Q},\tilde{Z})=
                \begin{bmatrix}
                  -\tilde{Q}&\ast&\ast&\ast\\
                  (\tilde{A}\tilde{Q}+\tilde{B}\tilde{Z})^\top&-\tilde{Q}&\ast&\ast\\
                  \tilde{B}_v^\top&O&-\gamma I&\ast\\
                  O&\tilde{C}\tilde{Q}+\tilde{D}\tilde{Z}&D_w&-\gamma I
                \end{bmatrix}\label{eq: gamma},
        \end{align}
        and dilated system matrices
        \begin{align}
                &\tilde{A} = E A (E^\top E)^{-1} E^\top, \ \tilde{B} = E B (E^\top E)^{-1} E^\top,\notag\\
                &\tilde{C} = C (E^\top E)^{-1} E^\top, \tilde{D} = D (E^\top E)^{-1} E^\top, \tilde{B}_v = E B_w.\notag
        \end{align}
        Then, $\mathcal{K}_{\rm diag}^{\infty,\gamma}\subset\mathcal{K}_\mathcal{S}^{\infty, \gamma} \subset \mathcal{K}_{\rm all}^{\infty, \gamma}$ holds.
\end{lem}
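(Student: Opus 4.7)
The plan is to adapt the three-step scheme used to prove Lemma \ref{lem: 1 linear}, with the discrete Lyapunov inequality replaced by the $H_\infty$ Bounded Real Lemma (Lemma \ref{lem: BRL}, statement 2). First introduce a nonconvex preliminary
\[
\hat{\mathcal{K}}_\mathcal{S}^{\infty,\gamma} = \bigl\{(E^\top E)^{-1}E^\top \tilde{Z}\tilde{Q}^{-1} E : \Gamma_\gamma(\tilde{Q},\tilde{Z}) + \rho\, \tilde{R}_\gamma \tilde{M}_\gamma \tilde{R}_\gamma \prec O\bigr\}
\]
with $\tilde{R}_\gamma = {\rm blkdiag}(\tilde{Q},\tilde{Q},I,I)$ and $\tilde{M}_\gamma = {\rm blkdiag}(N,N,O,O)$; this plays the role that $\tilde{R}M\tilde{R}$ did in Lemma \ref{thm: 1 nonlinear}. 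The three claims to assemble are then (a) $\hat{\mathcal{K}}_\mathcal{S}^{\infty,\gamma} \subset \mathcal{K}_{\rm all}^{\infty,\gamma}$, (b) $\mathcal{K}_\mathcal{S}^{\infty,\gamma} \subset \hat{\mathcal{K}}_\mathcal{S}^{\infty,\gamma}$, and (c) $\mathcal{K}_{\rm diag}^{\infty,\gamma} \subset \mathcal{K}_\mathcal{S}^{\infty,\gamma}$.

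For (a), congruence-transform by $\tilde{R}_\gamma^{-1}$ to obtain $\tilde{R}_\gamma^{-1}\Gamma_\gamma\tilde{R}_\gamma^{-1} + \rho\, \tilde{M}_\gamma \prec O$, then apply Finsler's lemma with the annihilator $M^\bot_\gamma = {\rm blkdiag}(E,E,I,I)$ (valid because $NE = O$ by definition of $N$). Exploiting $\tilde{A}E = EA$, $\tilde{B}E = EB$, $\tilde{C}E = C$, $\tilde{D}E = D$, and $\tilde{B}_v = EB_v$ together with $P := E^\top\tilde{Q}^{-1}E \in \mathcal{S}$ from Lemma \ref{lem: Agler}, this congruence collapses every block of the dilated inequality into the BRL inequality \eqref{BRL} for the original system with feedback gain $K = (E^\top E)^{-1}E^\top \tilde{Z}\tilde{Q}^{-1}E$. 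Since Lemma \ref{lem: StoBlkdiag} guarantees $K \in \mathcal{S}$, this yields $K \in \mathcal{K}_{\rm all}^{\infty,\gamma}$.

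For (b), combine the two LMIs in \eqref{eq: Hinf Ks} in the same manner as in Lemma \ref{lem: 1 linear}: the condition $N\tilde{Q} + \tilde{Q}N \succeq \eta N$ implies that $\tilde{M}_\gamma$ is dominated (within the state blocks) by $(\tilde{M}_\gamma \tilde{R}_\gamma + \tilde{R}_\gamma \tilde{M}_\gamma)/\eta$, and a Finsler-lemma step (equivalence of statements 3) and 4)) absorbs this into a quadratic perturbation of the form $\rho'\tilde{R}_\gamma \tilde{M}_\gamma \tilde{R}_\gamma$ as required for $\hat{\mathcal{K}}_\mathcal{S}^{\infty,\gamma}$. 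For (c), mirror the construction in Lemma \ref{lem: 1 linear}: given $K = ZQ^{-1} \in \mathcal{K}_{\rm diag}^{\infty,\gamma}$, lift $K$ via Lemma \ref{lem: StoBlkdiag} to a block-diagonal $\tilde{K}$, set $\tilde{Q}_k = {\rm blkdiag}(\ldots, |\mathcal{Q}_\mathcal{G}^j|Q_j, \ldots)_{j\in\mathcal{C}_k}$ and $\tilde{Z} = \tilde{K}\tilde{Q}$, and use the identity $\tilde{Q}^{-1}E = E(E^\top E)^{-1}Q^{-1}$ to show that congruence by ${\rm blkdiag}(E^\top,E^\top,I,I)$ of the dilated LMI recovers \eqref{eq: Hinf diag}. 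The auxiliary constraint $N\tilde{Q}+\tilde{Q}N \succeq \eta N$ then follows from $N\tilde{Q} = \tilde{Q}N$ (the clique-wise lift $\tilde{Q}$ commutes with the projector $N$), giving $N\tilde{Q}+\tilde{Q}N = 2N\tilde{Q}N \succeq 2\lambda_{\min}(\tilde{Q}) N$.

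The main obstacle is the asymmetric dilation of the performance channels: $\tilde{B}_v$ is dilated only on the left ($\tilde{B}_v = EB_v$), while $\tilde{C}$ and $\tilde{D}$ are dilated only on the right. This forces the non-uniform Finsler annihilator $M^\bot_\gamma = {\rm blkdiag}(E,E,I,I)$ instead of a uniform copy of $E$ as in the stabilization case, and the bookkeeping required to verify that every block of $\Gamma_\gamma$ -- including the cross-terms involving $\tilde{B}_v^\top$, $\tilde{C}\tilde{Q}$, and $\tilde{D}\tilde{Z}$ -- collapses consistently to the corresponding block of the BRL matrix for $(A,B,B_v,C,D,D_w)$ with feedback $K$ is the step that demands the most care beyond the stabilization argument.
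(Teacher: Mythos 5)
Your proposal is correct and is precisely what the paper intends: its own ``proof'' of Lemma~\ref{thm: Hinf linear} is just the one-line remark that one follows the procedure of Lemma~\ref{lem: 1 linear}, and your three-step adaptation (nonconvex preliminary set with $\tilde{R}_\gamma \tilde{M}_\gamma \tilde{R}_\gamma$, Finsler with annihilator ${\rm blkdiag}(E,E,I,I)$, and the clique-wise lift of $Q$) is exactly that procedure carried out, with the block-collapse identities $\tilde{A}E=EA$, $\tilde{B}\tilde{K}E=EBK$, $\tilde{C}E=C$, $\tilde{B}_v=EB_v$ checking out. The only caveat is notational: you implicitly correct the paper's sign/typo in \eqref{eq: Hinf Ks} (where $\succ O$ with $-\tilde{Q}$ diagonal blocks and ${\rm blkdiag}(\rho N, O_{2n\times 2n})$ should read consistently with the $\prec O$ convention of \eqref{BRL} and with both state blocks perturbed), which is the right reading.
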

\begin{proof}
        This can be proved by following the similar procedure to Lemma \ref{lem: 1 linear} and \cite{watanabe2024convex}.
\end{proof}

The following theorem provides a less conservative convex solution set to the $H_\infty$ control version of Problem \ref{prob 1} than both $\mathcal{K}_{\rm ext}^{\infty, \gamma}$ and $\mathcal{K}_\mathcal{S}^{\infty, \gamma}$.
A clique-wise decomposition is applied to the auxiliary matrix $G$ in \eqref{eq: Hinf ext}.
\begin{thm}\label{thm: Hinf ext linear}
        Let
        \begin{align}
          \mathcal{K}_{\mathcal{S}, {\rm ext}}^{\infty, \gamma}=\{&K=(E^\top E)^{-1}E^\top \tilde{Z}\tilde{G}^{-1}E:\ \exists \tilde{Q}\succ O,\notag\\
          &\tilde{Z}={\rm blkdiag}(\tilde{Z}_1,\cdots\tilde{Z}_q),\notag\\
          &\tilde{G}={\rm blkdiag}(\tilde{G}_1,\cdots\tilde{G}_q), \notag\\
          &\rho \in \mathbb{R}, \ \eta >0, \ {\rm s.t.}\notag\\
          &\Theta_\gamma(\tilde{G},\tilde{Q},\tilde{Z})+{\rm blkdiag}(\rho W, O_{2n\times 2n})\prec O,\notag\\
          &\tilde{G}^\top U+U\tilde{G}\succeq \eta U\}\label{eq: Hinf Ksext}
        \end{align}
        with $W$, $U$ in \eqref{FinsM}, $\tilde{H}$ in \eqref{eq: H} and 
        \begin{align}
            \Theta_\gamma(\tilde{G},\tilde{Q},\tilde{Z})&=\notag\\
            & \hspace{-2.5em} 
            \begin{bmatrix}
              -\tilde{Q}&\ast&\ast&\ast\\
              (\tilde{A}\tilde{G}+\tilde{B}\tilde{Z})^\top&\tilde{Q}-\tilde{G}-\tilde{G}^\top&\ast&\ast\\
              \tilde{B}_v^\top&O&-\gamma I&\ast\\
              O&\tilde{C}\tilde{G}+\tilde{D}\tilde{Z}&D_w&-\gamma I
            \end{bmatrix}\label{Theta}.
        \end{align}
        Then, $\mathcal{K}_{\rm ext}^{\infty,\gamma}\cup\mathcal{K}_\mathcal{S}^{\infty, \gamma}\subset\mathcal{K}_{\mathcal{S}, {\rm ext}}^{\infty, \gamma}\subset \mathcal{K}_{\rm all}^{\infty, \gamma}$ holds.
\end{thm}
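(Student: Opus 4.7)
The plan is to follow the same three-step template used for Theorem~\ref{thm: ext linear}, adapted to the $4\times 4$ block structure of the Bounded Real Lemma. First I would introduce a nonconvex intermediate set
\begin{align*}
\hat{\mathcal{K}}_{\mathcal{S},{\rm ext}}^{\infty,\gamma} = \{&(E^\top E)^{-1}E^\top \tilde{Z}\tilde{G}^{-1}E:\ \exists\tilde{Q}\succ O,\\
&\tilde{Z},\tilde{G}\ \text{block-diagonal},\ \rho\in\mathbb{R}\ {\rm s.t.}\\
&\Theta_\gamma(\tilde{G},\tilde{Q},\tilde{Z})+\tilde{J}^\top\,{\rm blkdiag}(\rho M,O)\,\tilde{J}\succ O\}
\end{align*}
where $\tilde{J}={\rm blkdiag}(\tilde{G},\tilde{G},I_{m_v},I_l)$ is the natural congruence used to linearise the perturbation term, and prove the three inclusions $\mathcal{K}_{\mathcal{S},{\rm ext}}^{\infty,\gamma}\subset\hat{\mathcal{K}}_{\mathcal{S},{\rm ext}}^{\infty,\gamma}\subset\mathcal{K}_{\rm all}^{\infty,\gamma}$ and $\mathcal{K}_{\rm ext}^{\infty,\gamma}\cup\mathcal{K}_\mathcal{S}^{\infty,\gamma}\subset\mathcal{K}_{\mathcal{S},{\rm ext}}^{\infty,\gamma}$.

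For $\hat{\mathcal{K}}_{\mathcal{S},{\rm ext}}^{\infty,\gamma}\subset\mathcal{K}_{\rm all}^{\infty,\gamma}$, I would extract nonsingularity of $\tilde{G}$ from the $(1,1)$ and $(2,2)$ blocks exactly as in Lemma~\ref{thm: ext nonlinear}, then pre/post multiply by $\tilde{J}^{-\top}$ and $\tilde{J}^{-1}$ to obtain an inequality amenable to Finsler's lemma with annihilator $M^\bot={\rm blkdiag}(E^\top,E^\top,I,I)$. Applying Finsler and substituting $X=E^\top\tilde{G}^{-1}E$, $P=E^\top\tilde{G}^{-\top}\tilde{Q}\tilde{G}^{-1}E$ yields the extended $H_\infty$ LMI in the original $(x,v,w,y)$ coordinates. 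The standard completion-of-squares inequality $XP^{-1}X^\top\succeq X+X^\top-P$, applied to the $(2,2)$ block, converts this into the Bounded Real Lemma condition (statement 2 of Lemma~\ref{lem: BRL}) with Lyapunov matrix $XP^{-1}X^\top\succ O$, certifying $K\in\mathcal{K}_{\rm all}^{\infty,\gamma}$.

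For $\mathcal{K}_{\mathcal{S},{\rm ext}}^{\infty,\gamma}\subset\hat{\mathcal{K}}_{\mathcal{S},{\rm ext}}^{\infty,\gamma}$, I would reuse the Finsler-based argument from Theorem~\ref{thm: ext linear}: combine $\Theta_\gamma+{\rm blkdiag}(\rho M,O)\succ O$ with $\tilde{G}^\top N+N\tilde{G}\succeq\eta N$ to establish nonsingularity of $\tilde{G}$ (and hence of $\tilde{H}$), then pass from $\rho M$ to the symmetric form $(|\rho|/\eta)(\tilde{H}^\top M+M\tilde{H})$ padded by zeros in the $v$ and $y$ channels, apply Finsler's equivalence 4)$\Leftrightarrow$3) to absorb the cross terms into a scalar multiple of $M^\top M=M$, and finally congruence by $\tilde{H}^\top$ to recover the nonconvex LMI. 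The inclusions $\mathcal{K}_\mathcal{S}^{\infty,\gamma}\subset\mathcal{K}_{\mathcal{S},{\rm ext}}^{\infty,\gamma}$ and $\mathcal{K}_{\rm ext}^{\infty,\gamma}\subset\mathcal{K}_{\mathcal{S},{\rm ext}}^{\infty,\gamma}$ follow the exact templates from Theorem~\ref{thm: ext linear}: the former by choosing $\tilde{G}=\tilde{G}^\top=\tilde{Q}$, the latter by lifting $G_j\mapsto\tilde{G}_k={\rm blkdiag}(\ldots,|\mathcal{Q}_\mathcal{G}^j|G_j,\ldots)$ and $Z\mapsto\tilde{Z}=\tilde{K}\tilde{G}$, using Lemma~\ref{lem: StoBlkdiag} to recover $K$ and Lemma~\ref{lem: Agler}/full column rank of $E$ to lift $G^{-\top}QG^{-1}$ to some $E^\top\tilde{G}^{-\top}\tilde{Q}\tilde{G}^{-1}E$.

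The main obstacle will be bookkeeping the congruence transformation across the $4\times 4$ block $\Theta_\gamma$: unlike Theorem~\ref{thm: ext linear}, only the upper-left $2\times 2$ block carries the auxiliary matrix $\tilde{G}$, while the $v$ and $y$ rows/columns must remain untouched. Concretely I must verify that inserting $\tilde{J}$ is consistent with the dilated matrices $\tilde{B}_v=EB_v$, $\tilde{C}=C(E^\top E)^{-1}E^\top$, $\tilde{D}=D(E^\top E)^{-1}E^\top$, so that after congruence the Finsler annihilator $M^\bot={\rm blkdiag}(E^\top,E^\top,I,I)$ returns the original $B_v,C,D$ without stray $(E^\top E)^{-1}$ factors; this is where the choices in the definitions of $\tilde{B}_v,\tilde{C},\tilde{D}$ are designed to cancel, but it has to be checked explicitly. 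Once this bookkeeping is settled, the stability proof transfers essentially verbatim.
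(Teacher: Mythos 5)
Your proposal is correct and follows essentially the same route as the paper, which itself only states the nonconvex intermediate set $\hat{\mathcal{K}}_{\mathcal{S},{\rm ext}}^{\infty,\gamma}$ (with perturbation ${\rm blkdiag}(\rho\tilde H M\tilde H, O)$, i.e.\ your $\tilde J^\top{\rm blkdiag}(\rho M,O)\tilde J$) and asserts that the argument parallels Theorem~\ref{thm: ext linear} and Lemma~\ref{thm: ext nonlinear}. The one bookkeeping point you flag does materialize: with $\tilde B_v=EB_v$ the congruences leave $B_v^\top X$ (not $B_v^\top$) in the $(3,1)$ block and $-P$ in the $(1,1)$ block, so an additional congruence by ${\rm blkdiag}(X^{-1},I,I,I)$ is needed before the completion of squares yields \eqref{BRL} with Lyapunov matrix $XP^{-1}X^\top$ --- a fixable detail, not a gap.
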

\begin{proof}
    See Appendix~\ref{App: C}. for the proof details.
\end{proof}

\section{NUMERICAL EXAMPLES}\label{sec: numerical examples}
Here, we compare the numerical performances in $H_\infty$ control presented in section~\ref{sec: Hinf} 
with the performance of centralized $H_\infty$ control, which considers a complete graph $\mathcal{G}$.
We computed the minimum value of $\gamma$, the upper bound of the $H_\infty$ norm $T(z)$ for the system \eqref{eq: Hsys} with a distributed controller $K \in \mathcal{S}$, using \eqref{eq: Hinf diag}, \eqref{K_ext,diag}, \eqref{eq: Hinf Ks}, and \eqref{eq: Hinf Ksext}, respectively.
We consider scenarios that $N$ agents are randomly partitioned into $l$ groups, and these groups are arranged sequentially, with each group connected to its adjacent groups by a single edge.
Here, we set the communication graphs within each group to be complete.
The matrix $ A $ is generated by randomly assigning poles within the range $[1, 5]$, each element of $ B $ is a random value in the range $[0, 1]$, and $C = D = D_w = B_w = I_N$.
The matrix $E$ was constructed by choosing $\mathcal{Q}_\mathcal{G}=\mathcal{Q}_\mathcal{G}^{\rm max}$.
The detailed simulation settings can be found in our GitHub repository \cite{github}.

The simulation results of 30 cases for $(N,l)=(10,3),(40,5)$ are shown in Fig.~\ref{fig: 08}.
The horizontal axis represents the sample number, and the vertical axis represents the ratio $\gamma_\ast/\gamma_{\rm cen}$, 
where $\gamma_\ast$ is the $H_\infty$ norm of each methods, and $\gamma_{\rm cen}$ is of the centralized method. 
Therefore, a smaller $\gamma_\ast/\gamma_{\rm cen}$ indicates a better result.
From Fig. 2, it can be observed that proposed method outperforms existing methods, regardless of the scale of the system.
These results demonstrate the efficacy of the proposed approach.
\begin{figure}[b]
        \centering
        \includegraphics[width = 9cm]{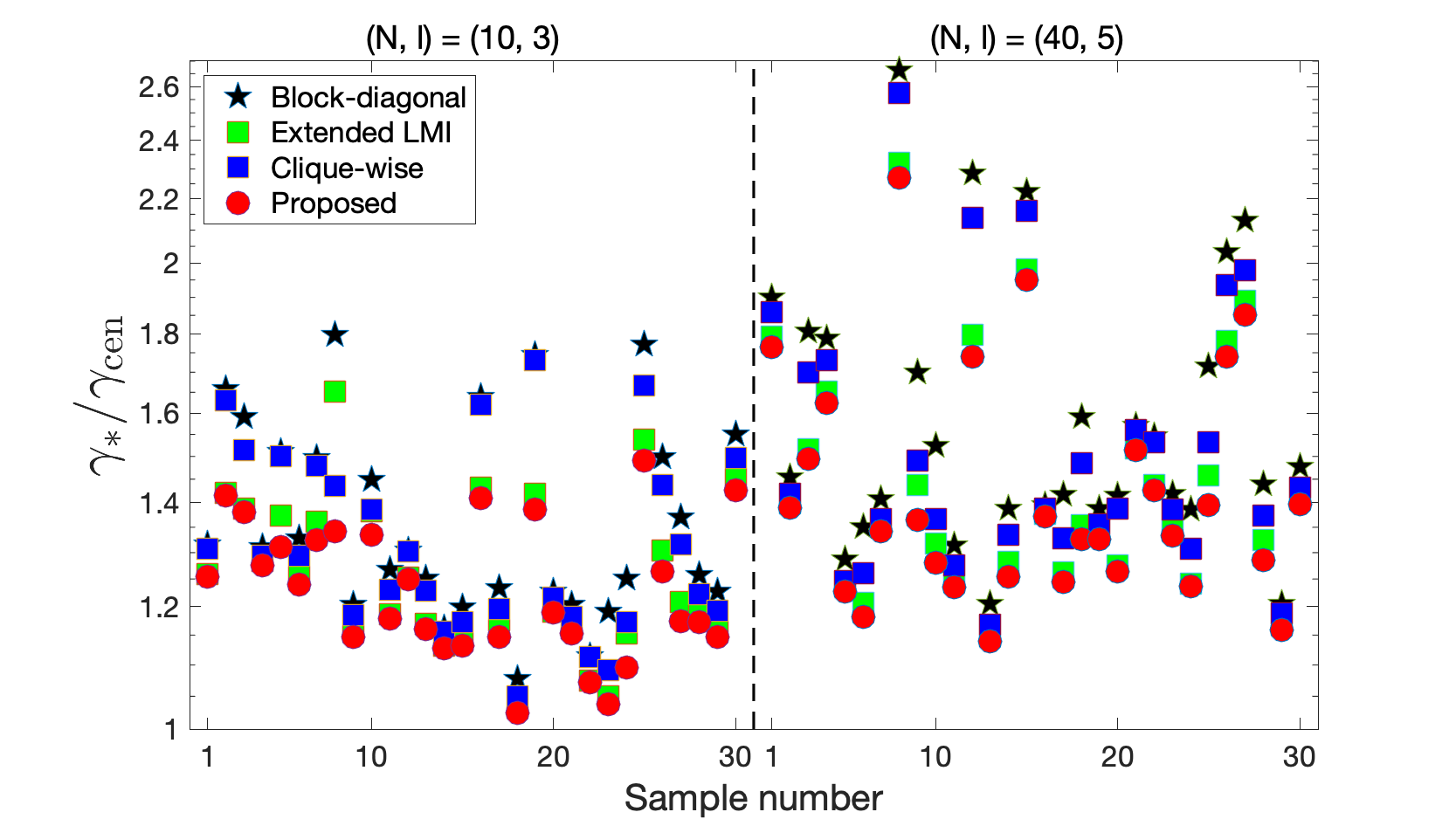}
        \caption{{\small
                Results for $H_\infty$ control.
                The black, green, blue and red markers represent the value for 
                $\mathcal{K}_{\rm diag}^{\infty,\gamma}$, 
                $\mathcal{K}_{\rm ext}^{\infty,\gamma}$,
                $\mathcal{K}_\mathcal{S}^{\infty, \gamma}$
                and the proposed method $\mathcal{K}_{\mathcal{S}, {\rm ext}}^{\infty, \gamma}$, 
                respectively.
                }}\label{fig: 08}
\end{figure}

\section{CONCLUSION}
This study addresses the distributed controller design problem for discrete-time systems. 
We derived a less conservative convex relaxation of the problem formulation by utilizing the clique-wise decomposition method and Extended LMIs.
Our future work includes providing a less conservative and numerically stable convexification of Lemma \ref{thm: ext nonlinear}, as well as investigating the efficient selection of the clique set, $\mathcal{Q}_\mathcal{G}$.


\bibliographystyle{IEEEtran}
\bibliography{main}

\appendix
\subsection{Generalized Notation}\label{App: general notation}
In the case of $n_i\neq m_i$, instead of $E$ in \eqref{eq: E}, two matrices $E_x$ and $E_u$ are defined as
\begin{equation}
  \begin{aligned}
          E_{x} &= [\dots, E_{x, \mathcal{C}_k}^\top,\dots]^\top \in \mathbb{R}^{(\Sigma_{k\in\mathcal{Q}_\mathcal{G}} n_{\mathcal{C}_k})\times n},\ k\in\mathcal{Q}_\mathcal{G},\\
          E_{x, \mathcal{C}_k}&= [\dots, E_{x, j}^\top, \dots]^\top \in \mathbb{R}^{n_{\mathcal{C}_k}\times n},\ j\in \mathcal{C}_k,\\
          E_{x, j} & = [O_{n_j\times n_1},\dots,I_{n_j},\dots,O_{n_j\times n_N}]\in\mathbb{R}^{n_j\times n},
  \end{aligned}
\end{equation}
with $n_{\mathcal{C}_k}=\Sigma_{j\in\mathcal{C}_k} n_j$ and 
\begin{equation}
  \begin{aligned}
          E_u &= [\dots, E_{u, \mathcal{C}_k}^\top,\dots]^\top \in \mathbb{R}^{(\Sigma_{k\in\mathcal{Q}_\mathcal{G}} m_{\mathcal{C}_k})\times m},\ k\in\mathcal{Q}_\mathcal{G},\\
          E_{u, \mathcal{C}_k}&= [\dots, E_{u, j}^\top, \dots]^\top \in \mathbb{R}^{m_{\mathcal{C}_k}\times m},\ j\in \mathcal{C}_k,\\
          E_{u, j} & = [O_{m_j\times m_1},\dots,I_{m_j},\dots,O_{m_j\times m_N}]\in\mathbb{R}^{m_j\times m},
  \end{aligned}
\end{equation}
with $m_{\mathcal{C}_k}=\Sigma_{j\in\mathcal{C}_k} m_j$. 
Then, Lemma \ref{lem: Agler} and \ref{lem: StoBlkdiag} are alternated as follows:\\
\textit{Lemma 2'}
Consider undirected graph $\mathcal{G}=(\mathcal{N}, \mathcal{E})$ with clique index set $\mathcal{Q}_\mathcal{G}=\{1,\dots,q\}$.
For $\tilde{P}={\rm blkdiag}(\tilde{P}_1,\dots\tilde{P}_q)\succ O$ with $\tilde{P}_k \in \mathbb{R}^{n_{\mathcal{C}_k}\times n_{\mathcal{C}_k}}$,
the matrix $P=E_x^\top \tilde{P}E_x=\Sigma_{k=1}^q E^\top_{x, \mathcal{C}_k} \tilde{P}_k E_{x, \mathcal{C}_k}$ is positive definite and belongs to $\mathcal{S}_{\rm sq}$, defined as 
\begin{equation}
  \begin{aligned}
    \mathcal{S}_{\rm sq}=\{E_x^\top \tilde{Q}E_x:\ &\tilde{Q}={\rm blkdiag}(\dots,\tilde{Q}_k,\dots),\\ &\tilde{Q}_k\in\mathbb{R}^{n_{\mathcal{C}_k}\times n_{\mathcal{C}_k}},k\in\mathcal{Q}_\mathcal{G}\}.\label{eq: Ssq}
  \end{aligned}
\end{equation}
\textit{Lemma 3'}
Suppose Assumption 1 holds. Then, the following transformation exists.
\begin{equation}
  \begin{aligned}
    \mathcal{S}=\{E_u^\top \tilde{Z}E_x:\ &\tilde{Z}={\rm blkdiag}(\dots,\tilde{Z}_k,\dots),\\ &\tilde{Z}_k\in\mathbb{R}^{m_{\mathcal{C}_k}\times n_{\mathcal{C}_k}},k\in\mathcal{Q}_\mathcal{G}\}\label{eq: S}
  \end{aligned}
\end{equation}

Here, Lemma 2' is equivalent to Lemma 2, and Lemma 3' can be proven analogously to Lemma 3. Please refer to our previous work [15].

Using Lemma 2' and Lemma 3', the definition of $\mathcal{K}_\mathcal{S}$ is altered as follows:\\
\textit{Lemma 5'} Assume Assumption 1 holds. Let
\begin{align}
        \mathcal{K}_\mathcal{S} =\{&K=(E_u^\top E_u)^{-1}E_u^\top \tilde{Z}\tilde{Q}^{-1}E_x\in\mathcal{S}:\notag\\
        &\exists \tilde{Z}={\rm blkdiag}(\dots,\tilde{Z}_k,\dots),\notag\\ &\tilde{Z}_k\in\mathbb{R}^{m_{\mathcal{C}_k}\times n_{\mathcal{C}_k}},k\in\mathcal{Q}_\mathcal{G},\notag\\
        &\tilde{Q}={\rm blkdiag}(\dots,\tilde{Q}_k,\dots)\succ O,\notag\\ &\tilde{Q}_k\in\mathbb{R}^{n_{\mathcal{C}_k}\times n_{\mathcal{C}_k}},k\in\mathcal{Q}_\mathcal{G},\notag \\
        &\rho\in\mathbb{R}, \ \eta > 0,\notag\\
        &{\rm{s.t.}} \ 
        \begin{bmatrix}
                \tilde{Q}&\ast\\
                \tilde{A}_x\tilde{Q}+\tilde{B}_{xu}\tilde{Z}&\tilde{Q}
        \end{bmatrix}
        +\rho W_x\succ O,\notag \\ 
        & \hspace{1.6em} U_x\tilde{Q}+\tilde{Q}U_x \succeq \eta U_x\},
\end{align} 
where
\begin{align}
    W_x=&{\rm blkdiag}(U_x,U_x), \ U_x = I-E_x^\top(E_x^\top E_x)^{-1}E_x,\\
    \tilde{A}_x =& E_x A (E_x^\top E_x)^{-1} E_x^\top, \ \tilde{B}_{xu} = E_x B (E_u^\top E_u)^{-1} E_u^\top.
\end{align} Then $\mathcal{K}_{\rm diag} \subset \mathcal{K}_\mathcal{S} \subset \mathcal{K}_{\rm all}$ holds.

Here, $\tilde{Z}\tilde{Q}^{-1}$ has the same blkdiag structure as $\tilde{Z}$, allowing the transformation using Lemma 3'.
Since $(E_u^\top E_u)^{-1}$ is a diagonal matrix, $(E_u^\top E_u)^{-1}E_u^\top \tilde{Z}\tilde{Q}^{-1}E_x\in\mathcal{S}$ holds.
The proof of Lemma 5' and modifications of other lemmas and theorems can be obtained by appropriately adjusting the notations.\\

\subsection{Proof of Lemma~5}\label{App: A}
The proof can be done similarly to Theorem 1 and Theorem 2 in \cite{watanabe2024convex}.
Before proving Lemma \ref{lem: 1 linear}, we give the preliminary lemma, which is the nonconvex version of Lemma \ref{lem: 1 linear}.
\begin{lem}
        Let
        \begin{align}
                \hat{\mathcal{K}}_\mathcal{S}=\{&K=(E^\top E)^{-1}E^\top \tilde{Z}\tilde{Q}^{-1}E\notag\\
                 &=(E^\top E)^{-1}\sum_{k=1}^{q}E_{\mathcal{C}_k}^\top (\tilde{Z}_k\tilde{Q}_k^{-1}) E_{\mathcal{C}_k}:\notag\\
                 &\exists \tilde{Z}={\rm{blkdiag}}(\tilde{Z}_1,\cdots\tilde{Z}_q),\notag\\
                 &\tilde{Q}={\rm{blkdiag}}(\tilde{Q}_1,\cdots\tilde{Q}_q)\succ O, \rho\in\mathbb{R}\notag\\
                 &{\rm{s.t.}} \ 
                 \Phi (\tilde{Q}, \tilde{Z})
                 +\rho \tilde{R}W\tilde{R}\succ O\}
        \end{align}
        with $W$ in \eqref{FinsM},
        \begin{align}
                \Phi (\tilde{Q}, \tilde{Z})=
                \begin{bmatrix}
                        \tilde{Q}&\ast\\
                        \tilde{A}\tilde{Q}+\tilde{B}\tilde{Z}&\tilde{Q}
                \end{bmatrix}
        \end{align}
        and
        \begin{align}
                \tilde{R} \ \ \ &={\rm blkdiag}(\tilde{Q},\tilde{Q}).
        \end{align}
        Then, $\hat{\mathcal{K}}_\mathcal{S}\subset \mathcal{K}_{\rm all}$ holds.
\end{lem}
\begin{proof}
        Let $\tilde{K}=\tilde{Z}\tilde{Q}^{-1}$.
        For $(E^\top E)^{-1}E^\top \tilde{Z}\tilde{Q}^{-1}E \in \hat{\mathcal{K}}_\mathcal{S}$, it can be seen that
        \begin{align*}
                O\prec &\tilde{R}^{-1}(\Phi (\tilde{Q}, \tilde{Z})+\rho \tilde{R}M\tilde{R})\tilde{R}^{-1}\\
                = &
                \begin{bmatrix}
                        \tilde{Q}^{-1}&\ast\\
                        \tilde{Q}^{-1}(\tilde{A}+\tilde{B}\tilde{K})&\tilde{Q}^{-1}
                \end{bmatrix}
                +\rho M
        \end{align*}
        From Finsler's lemma, by setting $K = (E^\top E)^{-1}E^\top \tilde{K}E$ and $P=E^\top \tilde{Q}^{-1}E$, we get the following inequality:
        \begin{align*}
                O&\prec 
                \begin{bmatrix}
                        \ \hspace{-0.5em}E^\top \  O\ \\
                        O\  \ E^\top\
                \end{bmatrix}
                \begin{bmatrix}
                        \tilde{Q}^{-1}&\ast\\
                        \tilde{Q}^{-1}(\tilde{A}+\tilde{B}\tilde{K})&\tilde{Q}^{-1}
                \end{bmatrix}
                \begin{bmatrix}
                        E \ \ O \\
                        O\ \  E
                \end{bmatrix}\\
                &=
                \begin{bmatrix}
                        P&\ast\\P(A+BK)&P
                \end{bmatrix}
                =
                \begin{bmatrix}
                        P&\ast\\A+BK&P^{-1}
                \end{bmatrix}
        \end{align*}
        Here, Lemma \ref{lem: StoBlkdiag} guarantees $K\in \mathcal{S}$, thus $\hat{\mathcal{K}}_\mathcal{S}\subset \mathcal{K}_{\rm all}$.
\end{proof}

Finally, we prove Lemma~5. This is a convexification of Lemma 10.

\begin{proof}(Proof of Lemma~5)
        First, we show $\mathcal{K}_\mathcal{S} \subset \hat{\mathcal{K}}_\mathcal{S}$. 
        By combining two LMIs in \eqref{eq: thm2}, we get $\Phi (\tilde{Q}, \tilde{Z})+\rho/\eta (W\tilde{R}+ \tilde{R}W)\succ O$.
        Pre- and post-multiplying with $\tilde{R}$ and utilizing Finsler's lemma yields $\exists\beta \in \mathbb{R},\ \tilde{R}^{-1}\Phi (\tilde{Q}, \tilde{Z})\tilde{R}^{-1}+\beta W\succ O$.
        This satisfies a matrix inequality condition of $\hat{\mathcal{K}}_\mathcal{S}$, thus $\mathcal{K}_\mathcal{S} \subset \hat{\mathcal{K}}_\mathcal{S}$.

        Next, we prove $\mathcal{K}_{\rm diag} \subset \mathcal{K}_\mathcal{S}$. 
        For $K \in \mathcal{K}_{\rm diag}$, we have $K=ZQ^{-1}\in\mathcal{S}$ with some $Z\in\mathcal{S}$ and $Q={\rm blkdiag}\{Q_1,\dots,Q_N\}\succ O$.
        From lemma \ref*{lem: StoBlkdiag}, block-diagonal matrix $\tilde{K}$ such that $K=(E^\top E)^{-1}E^\top \tilde{K}E$ exists.
        We define $\tilde{Q}={\rm blkdiag}(\cdots,\tilde{Q}_k,\cdots)\succ O$ with $\tilde{Q}_k = {\rm blkdiag}(\cdots, \underset{j\in\mathcal{C}_k}{|\mathcal{Q}_\mathcal{G}^j|}Q_j,\cdots)\succ O$.
        By setting $\tilde{Z}=\tilde{K}\tilde{Q}$, we get $(E^\top E)^{-1} E^\top \tilde{Z}\tilde{Q}^{-1} E =\ (E^\top E)^{-1} E^\top \tilde{Z}E(E^\top E)^{-1}Q^{-1}=\ ZQ^{-1}=K$
        since $\tilde{Q}^{-1} E=E(E^\top E)^{-1}Q^{-1}$ and $Z$ can be represented as $Z=(E^\top E)^{-1} E^\top \tilde{Z}\tilde{Q}^{-1} E\in\mathcal{S}$ from lemma \ref*{lem: StoBlkdiag}.
        \begin{align*}
                O\prec&
                \begin{bmatrix}
                        Q^{-1}&\ast\\Q^{-1}(A+BK)&Q^{-1}
                \end{bmatrix}\\
                =&
                \begin{bmatrix}
                        E^\top \tilde{Q}^{-1}E&\ast\\E^\top \tilde{Q}^{-1}E(A+BK)&E^\top \tilde{Q}^{-1}E
                \end{bmatrix}\\
                =&
                \begin{bmatrix}
                        \ \hspace{-0.5em}E^\top \  O\ \\
                        O\  \ E^\top\
                \end{bmatrix}
                \begin{bmatrix}
                        \tilde{Q}^{-1}&\ast\\\tilde{A}+\tilde{B}\tilde{K}&\tilde{Q}^{-1}
                \end{bmatrix}
                \begin{bmatrix}
                        E \ \ O \\
                        O\ \  E
                \end{bmatrix}.
        \end{align*}
        By using Finsler's lemma, and utilizing $\tilde{Q}U=U\tilde{Q}$, we get
        \begin{align*}
                O\prec& \
                \Phi (\tilde{Q}, \tilde{Z})+\rho \tilde{R}W\tilde{R}\\
                =& \
                \Phi (\tilde{Q}, \tilde{Z})+\rho W\tilde{R}^2W\\
                \preceq& \
                \Phi (\tilde{Q}, \tilde{Z})+ \mu W,
        \end{align*}
        where $\mu = |\rho| \lambda_{\rm max}^2(\tilde{R})$.
        Moreover, $U\tilde{Q}+\tilde{Q}U=U^2\tilde{Q}+\tilde{Q}U^2=2U\tilde{Q}U\succeq 2\lambda_{\rm min}(\tilde{Q})U$.
        Here, $\lambda_{\rm max/min}(A)$ denote the maximum/minimum eigenvalue of $A$.
        Therefore, we obtain $K \subset \mathcal{K}_\mathcal{S}$ and thus $\mathcal{K}_{\rm diag} \subset \mathcal{K}_\mathcal{S}$.
\end{proof}

\subsection{SDP variables for the proposed method}\label{App: B}
Since the computational time can vary depending on solvers, we compare the number of variables involved in the semidefinite programming (SDP) formulations.
In the case that the edge set $\mathcal{E}$ is chosen as $\mathcal{Q}_\mathcal{G}$, the orders of the number of SDP variables are $\mathcal{O}(nd)$, $\mathcal{O}(nd)$, $\mathcal{O}(n^2)$ and $\mathcal{O}(n^2d^2)$ for $\mathcal{K}_{\rm diag}$,$\mathcal{K}_\mathcal{S}$, $\mathcal{K}_{\rm ext}$ and $\mathcal{K}_{\mathcal{S}, {\rm ext}}$, respectively, where $d$ denotes the maximum number of neighboring agents;
precisely,
\begin{align*}
  \mathcal{K}_{\rm diag}:&\quad \mathcal{F}(Z) \leq n(d+1)\hspace{5.2em} \\
  \mathcal{K}_\mathcal{S}:&\quad \mathcal{F}(\tilde{Z}) \leq 4nd\hspace{7.2em}   \\
  \mathcal{K}_{\rm ext}:&\quad \mathcal{F}(Q) \leq \frac{1}{2}n(n+1)\hspace{4.4em}   \\
  \mathcal{K}_{\mathcal{S}, {\rm ext}}:&\quad \mathcal{F}(\tilde{Q}) \leq \frac{1}{2}(2nd)(2nd+1) \hspace{1.5em}   
\end{align*}
where $\mathcal{F}(X)$ denotes the number of SDP variables in matrix $X$.
Here, the number of SDP variables of the proposed method $\mathcal{K}_{\mathcal{S}, {\rm ext}}$, $\mathcal{O}(n^2d^2)$, is the largest.
However, for a large, sparse graph satisfying $n\gg d$, the number approximates $\mathcal{O}(n^2)$, the same as the number of SDP variables of the extended LMIs, $\mathcal{K}_{\rm ext}$.

\subsection{Proof of Theorem 2}\label{App: C}
The proof can be done similarly to Theorem~\ref{thm: ext linear}, by considering the following lemma, a nonconvex version of Theorem~\ref{thm: ext linear}.
\begin{lem} 
        Let
        \begin{align}
          \hat{\mathcal{K}}_{\mathcal{S},{\rm ext}}^{\infty, \gamma}=\{&K=(E^\top E)^{-1}E^\top (\tilde{Z}\tilde{G}^{-1})E:\ \exists \tilde{Q}\succ O,\notag\\
          &\tilde{Z}={\rm blkdiag}(\tilde{Z}_1,\cdots\tilde{Z}_q)\notag\\
          &\tilde{G}={\rm blkdiag}(\tilde{G}_1,\cdots\tilde{G}_q), \ \rho \in \mathbb{R} \ {\rm s.t.}\notag\\
          &\Theta_\gamma(\tilde{G},\tilde{Q},\tilde{Z})+{\rm blkdiag}(\rho \tilde{H}W\tilde{H}, O_{2n\times 2n})\succ O\}
        \end{align}
        with $W$ in \eqref{FinsM}, $\tilde{H}$ in \eqref{eq: H} and $\Theta_\gamma$ in \eqref{Theta}.
        Then, $\hat{\mathcal{K}}_{\mathcal{S},{\rm ext}}^{\infty, \gamma}\subset\mathcal{K}_{\rm all}^{\infty, \gamma}$ holds.
\end{lem}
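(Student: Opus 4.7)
The plan is to mirror the proof of Lemma \ref{thm: ext nonlinear}, substituting the discrete-time Lyapunov target with the standard BRL of Lemma \ref{lem: BRL}, statement 2). Given $K=(E^\top E)^{-1}E^\top\tilde{Z}\tilde{G}^{-1}E\in\hat{\mathcal{K}}_{\mathcal{S},{\rm ext}}^{\infty,\gamma}$, the membership $K\in\mathcal{S}$ follows immediately from Lemma \ref{lem: StoBlkdiag}, so the task reduces to certifying $\|T(z)\|_\infty<\gamma$ for the closed-loop system.

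First, I would extract the $(1,1)$ and $(2,2)$ diagonal blocks of the defining matrix inequality and combine them exactly as in the proof of Lemma \ref{thm: ext nonlinear}, to establish nonsingularity of $\tilde{G}$ and hence of $\tilde{H}$. Next, a congruence by $\mathrm{blkdiag}(\tilde{H}^{-\top},I_{m_v},I_l)$ (with its transpose applied from the right) reduces the perturbation block to $\mathrm{blkdiag}(\rho M,O,O)$. Finsler's lemma, applied with an annihilator of $M$ built from $E$ and extended with identities on the disturbance and output channels, then projects the dilated inequality onto the original coordinates. Using the identities $\tilde{A}E=EA$, $\tilde{C}E=C$, $\tilde{D}\tilde{K}E=DK$, $\tilde{B}_v=EB_v$, and $E^\top\tilde{K}E=(E^\top E)K$, each reduced block can be written in terms of the original system matrices, parameterized by $X=E^\top\tilde{G}^{-1}E$ and $P=E^\top\tilde{G}^{-\top}\tilde{Q}\tilde{G}^{-1}E\succ O$.

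Finally, the completing-the-square inequality $X^\top P^{-1}X\succeq X+X^\top-P$ (valid since $P\succ O$), as in the closing step of Lemma \ref{thm: ext nonlinear}, lets me replace the relevant diagonal block and expose the Lyapunov matrix $P_{\rm Lya}=XP^{-1}X^\top\succ O$. A subsequent congruence by an appropriate nonsingular block-diagonal matrix built from $X$ (whose invertibility is inherited from that of $\tilde{G}$ together with the full column rank of $E$) should then bring the inequality into exactly the form of the standard BRL of Lemma \ref{lem: BRL}, statement 2), for the closed-loop data $(A+BK,\,B_v,\,C+DK,\,D_w)$, certifying $\|T(z)\|_\infty<\gamma$ and hence $K\in\mathcal{K}_{\rm all}^{\infty,\gamma}$. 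The main obstacle will be the careful bookkeeping of the $H_\infty$-specific blocks involving $\tilde{B}_v=EB_v$ and $\tilde{C}=C(E^\top E)^{-1}E^\top$: these dilated matrices are asymmetric in their $E$-factors (unlike $\tilde{A}$ and $\tilde{B}$), so both the Finsler reduction and the subsequent completing-the-square step on the mixed blocks require more care than the pure stability argument in Lemma \ref{thm: ext nonlinear}.
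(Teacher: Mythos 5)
Your proposal is correct and follows exactly the route the paper intends: the paper omits this proof, deferring to the argument of Lemma~6 combined with the Bounded Real Lemma, and your sketch supplies precisely that argument (nonsingularity of $\tilde{G}$ from the combined diagonal blocks, congruence by $\mathrm{blkdiag}(\tilde{H}^{-\top},I_{m_v},I_l)$ to normalize the perturbation, reduction via the annihilator $\mathrm{blkdiag}(E,E,I_{m_v},I_l)$ using $\tilde{A}E=EA$, $\tilde{C}E=C$, $\tilde{B}_v=EB_v$, completing the square on $P-X-X^\top$, and a final congruence in $X$ to reach the standard BRL with Lyapunov matrix $XP^{-1}X^\top$). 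You also correctly anticipate the only genuine bookkeeping subtlety --- the asymmetric $E$-factors in $\tilde{B}_v$ and $\tilde{C}$, which leave an extra $X$ in the $(3,1)$ block that the final congruence must absorb --- so the sketch is sound.
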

\end{document}